\def\thmhead@plain#1#2#3{%
  \thmname{#1}\thmnumber{\@ifnotempty{#1}{ }\@upn{#2}}%
  \thmnote{ {\the\thm@notefont#3}}}
\let\thmhead\thmhead@plain
\newcommand{\tpitchfork}{%
  \vbox{
    \baselineskip\z@skip
    \lineskip-.52ex
    \lineskiplimit\maxdimen
    \m@th
    \ialign{##\crcr\hidewidth\smash{$-$}\hidewidth\crcr$\pitchfork$\crcr}
  }%
}
\newtheorem{theorem}{Theorem}
\newtheorem*{theorem*}{Theorem}
\newtheorem{lemma}{Lemma}
\newtheorem{definition}{Definition}
\newtheorem{question}{Question}
\title{Local Quantum Codes from Subdivided Manifolds}
\author{Elia Portnoy \footnote{Mathematics Dept. Massachusetts Institute of Technology,
Cambridge, MA}}
\date{}
\begin{document}
\maketitle

\begin{abstract} For $n \ge 3$, we demonstrate the existence of quantum codes which are local in dimension $n$ with $V$ qubits, distance $V^{\frac{n-1}{n}}$, and dimension $V^{\frac{n-2}{n}}$, up to a $polylog(V)$ factor. The distance is optimal up to the polylog factor. The dimension is also optimal for this distance up to the polylog factor.  The proof combines the existence of asymptotically good quantum codes, a procedure to build a manifold from a code by Freedman-Hastings, and a quantitative embedding theorem by Gromov-Guth.
\end{abstract}

\section*{\centering Introduction}
 Asymptotically good quantum codes are quantum codes that are theoretically very good at correcting errors. Recently, several breakthrough results have given constructions for asymptotically good quantum codes. However, implementing them into a physical architecture presents several difficulties. We can imagine implementing an architecture for a quantum code by placing each qubit at different point in a $3$-dimensional lattice, so that each check of the quantum code only involves the qubits that are placed within a constant radius of some lattice point. We'll refer to quantum codes which admit such an architecture as local, and we'll give a rigorous definition of local later in the introduction. There are some known limitations on local quantum codes, and in particular, asymptotically good quantum codes are far from being local. In this paper we give a procedure for transforming any given quantum code that satisfies certain mild assumptions into a local quantum code. This procedure is related to the procedure of subdividing a triangulated manifold and also involves some probabilistic steps. By applying this procedure to the known asymptotically good quantum codes we show that there are local quantum codes which match the known limitations on them. 

\par We now give some definitions for quantum codes needed to state the results of the paper. A Calderbank-Shor-Steane (CSS) quantum code, denoted here by $\mathcal{C}$, is a 3-term chain complex of $\mathbb{F}_2$ vector spaces with a distinguished set of basis vectors,

$$C_2 \xrightarrow{H_2^{T}} C_1 \xrightarrow{H_1} C_0$$

\noindent where $H_2$ and $H_1$ are matrices defining the boundary maps so that $H_1 H_2^{T} = 0$. The size of $\mathcal{C}$ is $size(\mathcal{C}) = dim(C_1)$. The dimension of $\mathcal{C}$ is,
$$dim(\mathcal{C}) = dim(ker (H_1)/Im (H_2^T)) = dim(ker (H_2)/Im (H_1^T))$$
\noindent Note that $size(\mathcal{C})$ is often referred to as the number qubits, and $dim(\mathcal{C})$ is often referred to as the number of logical qubits. For a vector $v \in C_i$ with $i \in \{0,1,2\}$, we let $|v|$ denote its hamming weight with respect to the basis given by $\mathcal{C}$. The $X$-distance and $Z$-distance of the code are,

$$d_X(\mathcal{C}) = \min(|v|: v \in ker (H_1), v \not\in Im (H_2^T))$$
$$d_Z(\mathcal{C}) = \min(|v|: v \in ker (H_2), v \not\in Im (H_1^T))$$

\noindent The distance of $\mathcal{C}$ is $d(\mathcal{C}) = \min(d_X, d_Z)$. We say that $\mathcal{C}$ is low density parity check (LDPC) if $H_2$ and $H_1$ have at most $A$ non-zero entries in each row and column, where $A>0$ is some universal constant. From now on, we'll refer to LDPC CSS quantum codes simply as codes.

\par The desirable property of codes for good error correction is large dimension and large distance. In the recent decades there has been significant progress in constructions for codes with good error correction properties. We'll just mention a few here and refer to the literature to explain the terminology used for these constructions. In 2009 Z\'emor \cite{Z} asked if there where any codes $\mathcal{C}$ such that $dim(\mathcal{C}) dist(\mathcal{C})^2 > size(\mathcal{C})^{1+\epsilon}$ for some small $\epsilon>0$. In 2013 this was answered positiviely by Guth and Lubotzky by constructing codes from arithmetic hyperbolic 4-manifolds. In 2020 Hastings, Haah and O'Donnell \cite{HHO} constructed codes with distance at least $size(\mathcal{C})^{3/5} polylog^{-1}(\mathcal{C})$ and dimension like $size(\mathcal{C})^{3/5}$, using a twisted fiber bundle construction. In 2020 Breuckmann and Eberhardt \cite{BE} constructed codes with distance at least $size(\mathcal{C})^{3/5}$ and dimension like $size(\mathcal{C})^{4/5}$, using a balanced product of a classical code from a Ramanujan graph and a Sipser-Speilman code. In 2022 Panteleev and Kalachev \cite{PK} found codes with both distance and dimension like $size(\mathcal{C})$. Their construction was an expander lifted product and crucially used a two-sided robustly testable product code. Later that year Leverrier and Z\'emor \cite{LZ} constructed codes with both distance and dimension like $size(\mathcal{C})$, using a two-sided Cayley complex with a two-sided robustly testable product code.

\begin{theorem} [\cite{PK}, \cite{LZ}] For any $V>1$, there exists a LDPC code $\mathcal{C}$ with $size(\mathcal{C}) > V$, $d(\mathcal{C}) > \varepsilon\, size(\mathcal{C})$ and $dim(\mathcal{C}) > \varepsilon\, size(\mathcal{C})$, where $\varepsilon>0$ is some universal constant.
\end{theorem}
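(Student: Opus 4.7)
The plan is to exhibit the codes directly via the quantum Tanner code construction on a two-sided Cayley complex, following Leverrier--Z\'emor (the Panteleev--Kalachev lifted product gives a parallel route). Fix a constant $\Delta$ and seek a finite group $G$ of order comparable to $V$ together with two symmetric generating sets $A, B \subseteq G$, each of size $\Delta$, chosen so that the left Cayley graph $\mathrm{Cay}(G,A)$ and the right Cayley graph $\mathrm{Cay}(G,B)$ are both strong spectral expanders. These data assemble into a square $2$-complex whose vertices are $G$, whose edges are indexed by $G\times A$ and $G\times B$, and whose faces are the $4$-tuples $(g, ag, gb, agb)$. I would then pick two classical linear codes $C_A, C_B \subseteq \mathbb{F}_2^{\Delta}$ of constant rate and positive relative distance, place qubits on the faces, and impose $X$-checks (resp. $Z$-checks) at each vertex asserting that the surrounding $\Delta\times\Delta$ block lies in $C_A \otimes C_B$ (resp. $C_A^\perp \otimes C_B^\perp$). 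This produces a 3-term chain complex with bounded row and column sparsity, giving an LDPC CSS code $\mathcal{C}$ with $\mathrm{size}(\mathcal{C}) = |G|\cdot\Delta^2 \sim V$.

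The work is then to show linear distance and linear dimension. For distance one argues that if a low-weight representative of a nontrivial $X$- or $Z$-homology class existed, its restriction to a typical local $\Delta\times\Delta$ window would be a sparse element of the tensor product code, and one could iteratively reduce its support using local corrections. This cascade closes exactly when the pair $(C_A,C_B)$ is \emph{two-sided robustly testable}: every vector that looks locally close to a tensor codeword is globally close to one, and the spectral expansion of the Cayley graphs propagates this local-to-global inference. A standard decoder/expansion argument then yields $d(\mathcal{C}) \ge \varepsilon\, \mathrm{size}(\mathcal{C})$. For dimension, I would compute the Euler characteristic of the chain complex: $\dim C_1 - \dim C_0 - \dim C_2 = |G|\Delta^2 - 2|G|$ up to bounded factors, so $\dim(\mathcal{C}) \ge \varepsilon\,\mathrm{size}(\mathcal{C})$ provided the checks are almost independent, which again follows from robust testability.

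The main obstacle is the robust testability of the inner pair $(C_A,C_B)$ at constant alphabet size $\Delta$. Neither random codes nor off-the-shelf algebraic codes immediately satisfy the two-sided version needed here; Panteleev--Kalachev obtain it through a lifted-product construction over a matrix group together with a careful random choice, while Leverrier--Z\'emor produce it from random quasi-cyclic inner codes whose robustness is verified against the Cayley complex using combinatorial expansion. Having established the inner codes with the required testability constant $\kappa>0$, and chosen $G$, $A$, $B$ so that the Cayley expansion exceeds the threshold dictated by $\kappa$, all the ingredients assemble and the theorem follows with a universal $\varepsilon>0$.
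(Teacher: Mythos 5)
This theorem is quoted from the literature---the paper offers no proof of it beyond citing \cite{PK} and \cite{LZ}---and your sketch is precisely an outline of the Leverrier--Z\'emor quantum Tanner code construction (with the Panteleev--Kalachev lifted product as the parallel route), so it takes essentially the same path as the paper's citation. As you acknowledge, the genuinely hard ingredients (two-sided robust testability of the inner code pair and the expansion-based distance argument, plus the precise qubit-versus-check count for the dimension) are deferred to those papers, which matches how the statement is used here as a black box.
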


We'll refer to codes with  with both distance and dimension like $size(\mathcal{C})$, as in the above theorem, as asymptotically good quantum codes. Next, we give a formal definition of what it means for a code to be local. 

\begin{definition} For an LDPC code $\mathcal{C}$ let $\mathcal{B}$ be the distinguished set of basis vectors for $C_1$. We say that two distinguished basis vectors $e_i, e_j \in \mathcal{B}$, are part of a check if there is a row in either $H_1$ or $H_2$ for which the $i$ and $j$ entries are non-zero. For a dimension $n>0$ and some number $V>0$, define the following $n$-dimensional cube of lattice points,

$$Q^n(V)  = \{(x_1, x_2 \ldots x_n) \in \mathbb{Z}^n: \sum_{i=1}^n |x_i| < AV^{1/n}\}$$

\noindent where $A>0$ is some large constant that only depends on $n$. For two lattice points $x, y \in \mathbb{Z}^n$ define $dist(x, y) = \sum_{i=1}^n |x_i - y_i|$. We say that $\mathcal{C}$ is \textbf{local in dimension $n$} (or just \textbf{local} if $n$ is implicit) if there is an injective map,

$$f: \mathcal{B} \to Q^n(size(\mathcal{C}))$$

\noindent so that every pair of basis vectors $e_i, e_j \in \mathcal{B}$ that are part of a check satisfy,
$$dist(f(e_i), f(e_j)) \le A$$
\end{definition}
In \cite{B}, \cite{BT} and \cite{BPT} it was shown that the distance of a code which is local in dimension $n$ has an upper bounded of roughly $size(\mathcal{C})^{(n-1)/n}$. So, in particular, asymptotically good quantum codes are not local. To give an example of a local code we can consider a toric code of size about $V$, dimension 2, and distance about $V^{1/2}$. This code can be defined as follows. Consider a 2-dimensionsal torus as the square $[0,V^{1/2}] \times [0, V^{1/2}]$ with opposite sides identified. Place a square grid of side-length 1 on this set. The chain complex of the toric code is defined by taking the basis for $C_0$ to be the vertices, basis for $C_1$ to be the edges, and basis for $C_2$ to be the squares of the grid. $H_1$ is the boundary operator on edges and $H_2$ to be the coboundary operator on the edges. We can see that this code is local in dimension 2 by ``folding" the torus into $\mathbb{R}^2$ as follows. Fix coordinates $(x,y) \in [0,V^{1/2}] \times [0, V^{1/2}]$ on the torus and define the folding map into $\mathbb{R}^2$ as,

$$F(x,y) = (|x-\frac{V^{1/2}}{2}|, |y-\frac{V^{1/2}}{2}|)$$

\noindent It is not too hard to define an injection $f: \mathcal{B} \to Q^2(V)$ using this $F$. By the upper bound mentioned above, this toric code has the best distance we can hope for, for a local code in dimension 2. The toric code is also local in any dimension $n \ge 3$, but for these $n$ its distance falls short of the known upper bound. This example is also instructive for us because it suggests a way of showing that a code is local. Namely, first find a geometrical object associated to the code, and then find a way to ``fold" the geometrical object into $\mathbb{R}^n$.

\par Now we fix some notation used to state the main result of the paper. We write $x \gtrsim y$ to mean $x \ge Ay$, for some constant $A>0$ that only depends on dimensional constants like $n$. We write $x \gtrsim_V y$ to mean $x \ge y\, log^{-A}(V)$, for some constant $A>0$ that only depends on dimensional constants like $n$. Finally,  
$x \approx y$ will mean that both $x \gtrsim y$ and $y \gtrsim x$, and similarly
$x \approx_V y$ will mean that both $x \gtrsim_V y$ and $y \gtrsim_V x$.

\par The main result of the paper is the following.

\begin{theorem} \label{Pthm} Fix any dimension $n \ge 3$, and suppose $\mathcal{C'}$ is an LDPC code that admits a sparse lift (see \cite{FH} for a definition of sparse lift). There exists a code $\mathcal{C}$ which is local in dimension $n$ with,
$$size(\mathcal{C}) \lesssim_V size(\mathcal{C'})^{\frac{n}{n-2}}$$
$$dim(\mathcal{C}) = dim(\mathcal{C'})$$
$$d(\mathcal{C}) \gtrsim_V d(\mathcal{C'})size(\mathcal{C'})^{\frac{1}{n-2}}$$
\end{theorem}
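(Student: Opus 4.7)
The plan is to follow the three-step strategy suggested by the abstract: (i) convert the code $\mathcal{C}'$ into a triangulated $n$-manifold via the Freedman--Hastings (FH) construction, (ii) map that manifold into $\mathbb{R}^n$ via the Gromov--Guth (GG) quantitative embedding theorem, and (iii) discretize the image on a unit lattice to read off the new local code. The toric-code-by-folding example at the end of the introduction is the prototype we are trying to generalize from dimension $2$ to arbitrary $n \ge 3$.

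In the first stage I would invoke FH on the sparse lift of $\mathcal{C}'$ to produce a triangulated $n$-manifold $M$ whose cellular chain complex contains $\mathcal{C}'$ as a direct summand, preserving the logical dimension and the weights of logical (co)systolic representatives up to constants, with total volume $\approx size(\mathcal{C}')$. In the second stage I would apply the GG quantitative embedding machinery to produce a map $F\colon M \to \mathbb{R}^n$, locally a folding of bounded multiplicity, whose image lies in a cube of side $L \approx_V size(\mathcal{C}')^{1/(n-2)}$. The scale $L = V'^{1/(n-2)}$ (where $V' = size(\mathcal{C}')$) is the characteristic length at which an $n$-complex of unit thickness and volume $V'$ can be laid into its own dimension with logical directions stretched efficiently; a cube of this side contains $L^n \approx_V size(\mathcal{C}')^{n/(n-2)}$ lattice points, matching the stated size bound.

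In the third stage I would subdivide $M$ so that each sub-cell maps via $F$ to a set of diameter $O(1)$ in the ambient unit lattice, let $\mathcal{C}$ be the cellular chain complex of this subdivision, and define the locality injection $f$ by sending each sub-cell of $\mathcal{B}$ to its nearest lattice point in $Q^n(size(\mathcal{C}))$. The bounded check radius is then immediate from the diameter bound on sub-cells. The equality $dim(\mathcal{C}) = dim(\mathcal{C}')$ follows because each step (FH and subdivision) is a homology-preserving move on the relevant summand. The distance bound follows because $F$ stretches logical 1-cycles of $M$ by a factor of $\approx L$: any weight-$w$ logical representative in $\mathcal{C}$ coarsens back through the subdivision and FH to a logical representative in $\mathcal{C}'$ of weight $\lesssim_V w/L$, which must be at least $d(\mathcal{C}')$, forcing $w \gtrsim_V d(\mathcal{C}')\cdot L$.

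The hard part, I expect, is making this last distance step rigorous. Since closed $n$-manifolds do not embed in $\mathbb{R}^n$, $F$ must be a folding rather than an injection, so distinct sub-cells of $M$ land on the same lattice point; low-weight logical representatives of $\mathcal{C}$ might exploit these coincidences to cancel (mod 2) under coarsening, giving pullback weight much less than $w/L$. This is where the probabilistic ingredient mentioned in the introduction presumably enters: a random perturbation of the ambient lattice relative to $F(M)$ should, with positive probability, place the distinct sheets of $F$ in transverse position at every lattice scale, ensuring the coarsening loses at most a bounded multiplicity factor beyond the deterministic $L$-contraction. Coordinating the FH construction, the GG embedding, and this transversality-plus-union-bound argument is, I expect, the technical core of the proof.
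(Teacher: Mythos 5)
Your three-stage outline (code $\to$ manifold $\to$ Euclidean space $\to$ local code) matches the paper's architecture, but there is a genuine gap at the very first step that propagates through the rest: the Freedman--Hastings construction does not produce an $n$-manifold. It produces a $d$-manifold with $d \ge 11$ (with the code living on $k$-simplices for some $4 \le k \le d-4$), so for the relevant range $n = 3, 4, \ldots$ there is no ``triangulated $n$-manifold whose chain complex contains $\mathcal{C}'$'' to feed into the embedding step. Moreover, the Gromov--Guth theorem maps an $m$-dimensional complex into $\mathbb{R}^n$ only for $n > m$ (the probabilistic general-position argument needs the codimension $n-m$ to be positive), so one cannot apply it to map an $n$-complex into $\mathbb{R}^n$ as you propose. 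The missing idea is the intermediate $2$-dimensional object: the paper shows that the FH handle decomposition of $M$ admits a good cover of multiplicity $3$, whose nerve $X$ is a $2$-complex of bounded degree and volume $\approx V$. Gromov--Guth is then applied to $X$ with $m = 2$, which is exactly where the stretching scale $R \approx_V V^{1/(n-m)} = V^{1/(n-2)}$ comes from. Your heuristic for $L = V'^{1/(n-2)}$ (``an $n$-complex of unit thickness laid into its own dimension'') does not derive this exponent and is not how it arises; without the nerve the exponent is unexplained and the construction cannot be carried out.

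Your distance step also needs restructuring. Rather than pushing a low-weight logical representative forward and ``coarsening'' it back to $\mathcal{C}'$ (where, as you correctly worry, mod-$2$ cancellation across sheets of the folding is hard to control), the paper argues in the other direction: FH's spine-deformation lemma shows any non-bounding $k$- or $(d-k)$-cycle in the subdivided manifold $M'$ must meet $\gtrsim d(\mathcal{C}')$ distinct handle pieces of $M$; these pieces sit at pairwise distance $\approx R$ in the subdivided nerve $X'$, so a connectivity argument forces the image of the cycle to cross $\gtrsim d(\mathcal{C}')R$ simplices of $X'$, and coarseness of the map $M' \to X'$ transfers this to $vol(z') \gtrsim d(\mathcal{C}')R$. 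The probabilistic transversality you anticipate is indeed present, but it is used to build the good cover (perturbing the FH attaching maps via a Lov\'asz Local Lemma argument) and inside the Gromov--Guth embedding itself, not to control cancellation under a pullback. Finally, converting the systole bound into a bound on both $d_X$ and $d_Z$ requires a separate cosystole inequality $sys^k(M) \gtrsim sys_{d-k}(M)$ via the dual polyhedral structure, which your outline omits.
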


\noindent Applying \cref{Pthm} with $\mathcal{C}'$ being an asymptotically good code, we will show the following.

\begin{theorem} \label{Cthm} For any dimension $n \ge 3$ and $V \gtrsim 1$, there exists a code $\mathcal{C}$ which is local in dimension $n$ with,
$$size(\mathcal{C}) \approx V$$
$$dim(\mathcal{C}) \approx_V V^{(n-2)/n}$$ 
$$d(\mathcal{C}) \approx_V V^{(n-1)/n}$$
\end{theorem}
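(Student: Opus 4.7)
The plan is to apply \cref{Pthm} to an asymptotically good quantum code of appropriately chosen size, and then use a small padding step to match the target parameters exactly.

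First, I would invoke the asymptotically good codes of \cite{PK}, \cite{LZ} (the Theorem 1 stated above) to produce an LDPC code $\mathcal{C}'$ with $size(\mathcal{C}') = S' \approx V^{(n-2)/n}$, adjusted downward by a polylogarithmic factor if needed in order to absorb the polylog losses incurred by \cref{Pthm}. This code satisfies $dim(\mathcal{C}') \approx S'$ and $d(\mathcal{C}') \approx S'$. I would then apply \cref{Pthm} to $\mathcal{C}'$ to obtain a code $\mathcal{C}$ local in dimension $n$; substituting $S' \approx V^{(n-2)/n}$ into the bounds of \cref{Pthm} gives
\[
size(\mathcal{C}) \lesssim_V V, \qquad dim(\mathcal{C}) = dim(\mathcal{C}') \approx_V V^{\frac{n-2}{n}}, \qquad d(\mathcal{C}) \gtrsim_V (S')^{\frac{n-1}{n-2}} = V^{\frac{n-1}{n}},
\]
where the distance bound uses $d(\mathcal{C}') \approx S'$ so that $d(\mathcal{C}')(S')^{1/(n-2)} \approx (S')^{(n-1)/(n-2)}$, and the final equality uses the clean identity $\frac{n-2}{n} \cdot \frac{n-1}{n-2} = \frac{n-1}{n}$. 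This already delivers the target dimension and distance, and delivers the target size up to a polylog factor.

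Second, to upgrade the bound $size(\mathcal{C}) \lesssim_V V$ (polylog-tight) to $size(\mathcal{C}) \approx V$ (constant-factor-tight, as required), I would shrink $S'$ further so that $size(\mathcal{C}) \le V/2$, then take $k = \lfloor V/size(\mathcal{C}) \rfloor$ copies of $\mathcal{C}$ via direct sum. This multiplies size by $k$ into the window $[V/2, V]$, multiplies dimension by $k$, which is only a polylog factor so the dimension stays $\approx_V V^{(n-2)/n}$, and preserves distance since the direct sum distance is the minimum of the summand distances. Locality of the direct sum is maintained by placing each summand in its own disjoint sub-region of the lattice $Q^n(V)$; such a packing exists provided the constant $A$ in the definition of $Q^n$ is taken sufficiently large in $n$.

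The main obstacle, I expect, lies not in this derivation but in a hypothesis of \cref{Pthm} that is not immediately verified by Theorem 1: one must show that the asymptotically good codes of \cite{PK} or \cite{LZ} admit a sparse lift in the sense of \cite{FH}. Once this has been established, \cref{Cthm} follows from \cref{Pthm} by the routine substitution and padding sketched above.
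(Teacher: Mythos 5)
Your overall route is the same as the paper's: feed an asymptotically good code (with a sparse lift) of suitably chosen size into \cref{Pthm}. But there is a genuine gap in how you close the two-sided claims. The theorem asserts $d(\mathcal{C}) \approx_V V^{(n-1)/n}$, and your argument only ever produces the lower bound: \cref{Pthm} gives $d(\mathcal{C}) \gtrsim_V d(\mathcal{C}')\,size(\mathcal{C}')^{1/(n-2)}$ and nothing in your substitution or padding step bounds the distance from above. The matching upper bound does not come from the construction at all; it is the general no-go bound for codes local in dimension $n$, $d(\mathcal{C}) \lesssim V^{(n-1)/n}$ (\cref{lcbound} in Appendix 2, the bound of \cite{B}, \cite{BT}, \cite{BPT}), which the paper invokes explicitly and you never do. (Your dimension bound is fine as stated, since $dim(\mathcal{C}) = dim(\mathcal{C}')$ exactly and $\mathcal{C}'$ is asymptotically good, so no appendix input is needed there.)

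A second, smaller hole is in the padding step: you take $k = \lfloor V/size(\mathcal{C})\rfloor$ copies and assert $k$ is only polylogarithmic, but \cref{Pthm} only bounds $size(\mathcal{C})$ from \emph{above}; without a lower bound $size(\mathcal{C}) \gtrsim_V V$ you cannot rule out, say, $size(\mathcal{C}) \sim V^{(n-1)/n}$, in which case the $k$-fold direct sum would inflate the dimension to roughly $V^{(n-1)/n}$ and break the claimed $dim(\mathcal{C}) \approx_V V^{(n-2)/n}$. This is repairable either by padding with dimension-zero local blocks instead of copies of $\mathcal{C}$, or by invoking \cref{pdbound}, which together with your distance and dimension lower bounds forces $size(\mathcal{C}) \gtrsim_V V$; the paper sidesteps the issue by simply letting $V$ be the size of the constructed code and importing both upper bounds from Appendix 2. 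Finally, your caveat about sparse lifts is fair but not a defect relative to the paper, which likewise takes it as given that the codes of \cite{PK}, \cite{LZ} admit sparse lifts in the sense of \cite{FH}.
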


\par We now sketch the proof of \cref{Pthm} focusing on the distance estimate. The first step is to transform the problem about constructing codes into an equivalent problem about triangulated manifolds. We say a triangulated manifold is local in dimension $n$ if there is a coarse map from it to $\mathbb{R}^n$. A coarse map roughly means that each simplex in the manifold is mapped into a ball of small radius and the pre-image of a small ball intersects only a small number of simplices in the manifold. We will show that one can construct a local code from a local manifold, so that the distance of the code corresponds to some  systoles of the manifold. The $k$-dimensional systole of a manifold is the smallest number of simplices needed to represent a $k$-cycle that this not a boundary in the manifold. 

\par In \cite{FH}, Freedman-Hastings give a procedure for constructing a triangulated manifold from a LDPC code, with some mild assimptions, so that the distance of the code corresponds to some systoles of the manifold. We apply this procedure to the given code $\mathcal{C'}$ to get a manifold $M$. Our goal will be to change the geometry of $M$ until it is local, and to then transform the resulting manifold back into a local code. We begin by using a construction called a nerve map to create a coarse map from our manifold to a 2-dimensional simplicial complex. In \cite{GG}, Gromov-Guth show that if $X$ is a simplicial complex with only a few simplices adjacent to each vertex, then there is a map with small pre-images from $X$ to a ball in $\mathbb{R}^n$ of a certain radius. They use the probabilistic method to construct these maps, and the estimate on the radius is almost sharp for expander-like complexes. We use this map from \cite{GG} to create a map from our 2-dimensional simplicial complex to a ball in $\mathbb{R}^n$. Composing this with the nerve map, we get a map from $M$ to $\mathbb{R}^n$.

\par Although the map in \cite{GG} has small pre-images, it stretches each simplex by a large factor so its not actually coarse. In order to make a coarse map we need to subdivide $M$ so that each new simplex gets mapped into a small ball. It is not so clear to us what this subdivision corresponds to on the associated code, which is why we rely on the language of manifolds. After this subdivision, the resulting manifold is local and we can transform it back into a local code $\mathcal{C}$. A important observation is that the subdivision decreases the systoles of our manifold. Because of this decrease, $d(\mathcal{C})$ is much smaller than $d(\mathcal{C'})$. However, we observe that the map from \cite{GG} can be made to stretch each simplex in a uniform way, which allows us to give some lower bound on the systole of the resulting subdivided manifold. This lower bound in turn allows us to estimate $d(\mathcal{C})$. This finishes the sketch of \cref{Pthm}.
\\
\par This paper is organized as follows. Section 1 contains the results from the literature used to associate a manifold to a code, and to associate a code to a manifold. Section 2 has the definition of a local manifold and preliminaries for coarse maps. Section 3 introduces nerve maps and establishes their properties. Section 4 contains the proof of \cref{Pthm} and \cref{Cthm}. Appendix 1 contains a proof of a slightly modified result from \cite{GG} used in the proof of \cref{Pthm}. Appendix 2 proves some upper bounds for the distance and dimension of local codes. The proofs of these upper bounds are a bit shorter but structurally the same as those already known. They can be found in \cite{FHK}, \cite{B}, \cite{BT}, and \cite{BPT}. We end the introduction with a few open questions.
 
\begin{question} Can we remove the $polylog(V)$ factor from \cref{Pthm}? A possible approach to this question would be to remove the polylog factor from the thick embedding theorem of Gromov-Guth \cite{GG}. 
 \end{question}
 
\begin{question} Is there a way of transforming any LDPC code into a local code without reference to a manifold? This might help get rid of the (probably unnecessary) assumption that the code admit a sparse lift. Also the local code in \cref{Pthm} is obtained using a probabilistic construction. Is there a deterministic procedure for transforming any LDPC code into a local code? 
 \end{question}

\begin{question} Suppose the $\mathcal{C}$ is the local code constructed from $\mathcal{C'}$ in \cref{Pthm}. If there is an efficient decoding algorithm for $\mathcal{C'}$, can we find an efficient decoding algorithm for $\mathcal{C}$ as well?
 \end{question}
 
 \begin{comment}
 \begin{question} For this question we'll define a variant of a local code. Let $\mathcal{C}$ be an LDPC code of size $V$. Let $T_V$ be a set of lattice points on an $n$-torus given by identifying opposite sides of $Q_V$, and let $\mathcal{B}$ the given a basis for $C_1$ as before. 
 \par We say that $\mathcal{C}$ is a \textbf{translation invariant} local code if there is a map, $f: \mathcal{B} \to T_V$ which satisfies the following conditions.

\begin{enumerate}
\item For any $y \in T_V$ we have $ |f^{-1}(y)| \lesssim 1$.
\item For any $e_i, e_j \in \mathcal{B}$ are part of a check then $dist(f(e_i), f(e_j)) \lesssim 1$.
\item Let $Aut(\mathcal{C})$ be the set of invertible maps on each $C_0, C_1$ and $C_2$ that commute with $H_1$ and $H_2$.
Let $L(y): T_V \to T_V$ be a translation by the vector $y$ on the torus. There is an action of $T_V$ on $\mathcal{C}$, denoted by a the homomorphism,
$$\rho: T_V \to Aut(\mathcal{C})$$
so that for any $e \in \mathcal{B}$ and any $y \in T_V$ we have $f(\rho(y)(e)) = L(y)(f(e))$.
\par The local code from \cref{Pthm} is not translation invariant. What are the best translation invariant local codes we can construct?
\end{enumerate}
\end{question}
\end{comment}

\section*{\centering Acknowledgments}
The author would like to thank Larry Guth for pointing out the connection between local codes and coarse maps from manifolds, a proof found in the appendix, and several helpful discussions. The author would also like to thank Matt Hastings for several clarifying comments and questions, some of which appear at the end of the introduction.

%%%%%%
\section*{\centering 1. Going between codes and manifolds}

The goal of this section is to introduce some properties of triangulated manifolds and to describe the correspondence between codes and triangulated manifolds. 
\\ \par In this paper, a triangulated $d$-manifold is a simplicial complex, such that the union of the simplices adjacent to any vertex form a set homeomorphic to $\mathbb{R}^d$. Given a triangulated $d$-manifold $M$, let $C_k(M)$ be the space of simplicial $k$-chains with coefficients in $\mathbb{F}_2$ and let $C^k(M)$ be the space of simplicial $k$-cochains with coefficients in $\mathbb{F}_2$. Boundary maps are denoted by $\partial_k: C_k(M) \to C_{k-1}(M)$ and coboundary maps are denoted by $\delta^k: C^k(M) \to C^{k+1}(M)$ respectively. The cycles are $Z_k(M) = ker (\partial_k)$ and the cocycles are $Z^k(M) = ker (\delta_k)$. The boundaries are $B_k(M) = Im(\partial_{k+1})$ and the coboundaries are $B^k(M) = Im(\delta_{k-1})$. Note that $\delta^{k} = \partial_{k+1}^{T}$, when viewed as linear maps. We also let $M(k)$ stand for the set of $k$-simplices in $M$, and let $vol(M) = |M(d)|$. 

\begin{definition} We define the \textbf{$k$-systole} and \textbf{$k$-cosystole} of a $M$, respectively, as

$$sys_k(M) = \inf_{z \in Z_k(M)/B_k(M)} vol(z)$$ 
$$sys^k(M) = \inf_{z \in Z^k(M)/B^k(M)} vol(z)$$ 

\noindent where $vol(z)$ is the number of $k$-simplices in the support of $z$. 
If $H_k(M) = 0$ we set $sys_k(M) = 0$, and if $H^k(M) = 0$ we set $sys^k(M) = 0$ as a convention.
\end{definition}

To associate a code to a manifold we will use the following result discussed in Guth-Lubotzky's introduction.

\begin{theorem} \cite{GL} \label{MClemma} Given a $d$-manifold $M$ and a dimension $0<k<d$ we may associate a code to $M$ as follows. For any $0 \le i \le d$, the set $M(i)$ functions as a basis for both $C_i(M)$ and $C^i(M)$. Thus, we can identify $C_k(M)$ and $C^k(M)$ as vector spaces with the $k$-simplices as a basis. The associated code $\mathcal{C}$ is 

$$C^{k+1}(M) \xrightarrow{(\delta^k)^T} C^k(M) \cong C_k(M) \xrightarrow{\partial_k} C_{k-1}(M)$$  

\noindent We have that, 
$$size(\mathcal{C}) = |M(k)|$$
$$dim(\mathcal{C}) = dim(H_k(M)) = dim(H^k(M))$$
$$d_X(\mathcal{C}) = sys_k(M)$$ 
$$d_Z(\mathcal{C}) = sys^k(M)$$
\end{theorem}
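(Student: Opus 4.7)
My plan is to prove this by unwinding definitions once the right dictionary is fixed. The key structural observation is that because $M(k)$ is a distinguished basis for both $C_k(M)$ and $C^k(M)$, there is a canonical vector-space identification $C_k(M) \cong C^k(M)$ under which, as the statement notes, the matrix of $\delta^k : C^k(M) \to C^{k+1}(M)$ is the transpose of the matrix of $\partial_{k+1} : C_{k+1}(M) \to C_k(M)$. Setting $H_1 = \partial_k$ and $H_2^T = (\delta^k)^T$, the condition $H_1 H_2^T = 0$ becomes the identity $\partial_k \partial_{k+1} = 0$, so the sequence really is a CSS chain complex. The equality $size(\mathcal{C}) = |M(k)|$ is then immediate, since $C_1$ of the code is $C_k(M)$, whose dimension is $|M(k)|$ by definition.

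For the dimension, I would apply the CSS definition directly: $\dim(\mathcal{C}) = \dim(\ker \partial_k / \text{Im } \partial_{k+1}) = \dim H_k(M; \mathbb{F}_2)$. The equality $\dim H_k(M) = \dim H^k(M)$ over $\mathbb{F}_2$ follows from the universal coefficient theorem over a field, or, more elementarily, from rank-nullity applied to $\partial_k$ and its transpose $\delta^{k-1}$, which have the same rank. For the distances, the Hamming weight of a chain $v$ in the basis $M(k)$ is precisely $vol(v)$, the number of $k$-simplices in its support. Hence $d_X(\mathcal{C}) = \min\{vol(v) : v \in \ker \partial_k \setminus \text{Im } \partial_{k+1}\} = sys_k(M)$. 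Since $H_2 = \delta^k$ and $H_1^T = \partial_k^T = \delta^{k-1}$, the symmetric computation gives $d_Z(\mathcal{C}) = \min\{vol(v) : v \in \ker \delta^k \setminus \text{Im } \delta^{k-1}\} = sys^k(M)$.

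There is no substantive obstacle; the proof is entirely bookkeeping of transposes and basis identifications. The one point that needs a remark is the degenerate case: if $H_k(M) = 0$, the minimum in the definition of $d_X$ is over an empty set, and the stated convention $sys_k(M) = 0$ is chosen precisely so that both sides of the identity degenerate in the same way, and similarly for $sys^k(M)$ and $d_Z$ when $H^k(M) = 0$. The conceptual content of the theorem, such as it is, is that the single basis $M(k)$ simultaneously realizes $H_1 = \partial_k$ and $H_2 = \delta^k$, so that the $X$-distance and $Z$-distance of the same CSS code read off the homological and cohomological systoles in the same dimension.
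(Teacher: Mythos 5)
Your proposal is correct, and it is exactly the intended argument: the paper gives no proof of this statement (it is quoted from Guth--Lubotzky), and the content is precisely the definitional dictionary you describe, with $H_1=\partial_k$, $H_2=\delta^k$, $H_1H_2^T=\partial_k\partial_{k+1}=0$, Hamming weight equal to simplicial volume, and $\dim H_k = \dim H^k$ over $\mathbb{F}_2$ by equality of the ranks of $\partial_k$ and its transpose. Your remark on the degenerate convention $sys_k(M)=0$ when $H_k(M)=0$ is a reasonable reading of a point the paper also leaves implicit.
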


To state the next theorem will use the following property of a simplicial complex.

\begin{definition}The degree of a simplicial complex $X$ is the maximum number of simplices adjacent to a vertex in $X$. We say that $X$ has \textbf{bounded degree} if the degree of $X$ is $\lesssim 1$.\end{definition} 

The following result of Freedman-Hastings shows how to associate a triangulated manifold to a code.

\begin{theorem} \cite{FH} \label{CMlemma} Let $\mathcal{C}$ be a LDPC code which admits sparse lift as defined in \cite{FH}. Let $V = size(\mathcal{C})$ and pick dimensions $d \ge 11$ and $4 \le k \le d-4$. Then there is an associated $d$-dimensional triangulated manifold $M$ with bounded degree such that,
$$vol(M) \approx V$$
$$dim(H_k(M)) = dim(\mathcal{C})$$
$$sys_k(M) \gtrsim d_X(\mathcal{C})$$ 
$$sys_{d-k}(M) \gtrsim d_Z(\mathcal{C})$$
\end{theorem}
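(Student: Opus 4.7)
The plan is to realize the chain complex of $\mathcal{C}$ geometrically as a CW complex, embed that realization into $\mathbb{R}^d$, take a regular neighborhood, and then close up and surger the result into a closed triangulated $d$-manifold whose middle-dimensional chains behave like the code. The sparse lift hypothesis is the key enabler: it provides a bounded-geometry algebraic thickening in which each check becomes an attaching map involving only boundedly many cells, so the geometric construction below has uniformly bounded local complexity.

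First I would build a CW complex $X$ whose cellular chain complex, after shifting so that $C_1$-cells of $\mathcal{C}$ live in dimension $k$, reproduces the three terms of $\mathcal{C}$ in dimensions $k-1,k,k+1$. Using the sparse lift and padding with auxiliary cells, $X$ can be arranged to have bounded degree and bounded combinatorial dimension. Because $d \ge 11$ greatly exceeds this combinatorial dimension, a general position argument produces a piecewise-linear embedding $X \hookrightarrow \mathbb{R}^d$ in which each cell sits inside a ball of bounded radius. Taking a triangulated regular neighborhood $N$ then gives a bounded-degree $d$-manifold with boundary, with $vol(N) \approx V$, equipped with a deformation retraction $r\colon N \to X$ whose fiber over each cell of $X$ contains a bounded number of simplices.

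Next I would close $N$ into a closed manifold by doubling along $\partial N$ and then performing handle surgeries in dimensions outside $\{k,k+1\}$ to kill the stray homology that the boundary and doubling introduce. The constraint $4 \le k \le d-4$ gives codimension at least four on each side, which is enough room for the standard handle-trading lemmas to eliminate unwanted classes without disturbing $H_k$ or changing $vol(M) \approx V$. After this step $\dim(H_k(M)) = \dim(\mathcal{C})$ by construction, and the statement about $sys_{d-k}(M)$ will follow by Poincar\'e duality combined with \cref{MClemma}, since $sys^k(M)$ equals $sys_{d-k}(M)$ on a closed $d$-manifold.

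The main obstacle is the systole estimate $sys_k(M) \gtrsim d_X(\mathcal{C})$. The natural approach is: given a nontrivial $k$-cycle $z$ in $M$, push it via $r$ to a chain in $X$ and then project onto the original $C_k$-cells to obtain $\pi(z) \in C_k$, with $|\pi(z)| \lesssim |z|$ thanks to the bounded-fiber property of $r$. The delicate part is verifying that $\pi(z)$ represents a nontrivial code class, since the surgeries used to close $N$ could in principle have killed $z$ only after creating new bounding chains whose projections to $C_k$ are themselves nontrivial codewords. Controlling this forces the surgeries to be designed so that the induced chain map $C_\ast(M) \to C_\ast(\mathcal{C})$ is a quasi-isomorphism in middle degrees; this is where the sparse lift earns its keep, because it guarantees the geometric thickening introduces no chain-level relations beyond those already present in $\mathcal{C}$. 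Quantifying this quasi-isomorphism at the level of weights, not just of classes, is the technical heart of the argument I would expect to be the most work.
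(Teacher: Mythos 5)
The paper does not actually reprove this statement: it is imported from Freedman--Hastings, being essentially their Theorem 1.2.1 with two adjustments explained right after the statement (dropping simple connectivity, which lets their method give $vol(M) \approx V$ rather than $\approx_V V$, and reading off $dim(H_k(M)) = dim(\mathcal{C})$ by tracing their construction through their Lemma 1.6.2). Your sketch follows the same broad strategy as the construction in \cite{FH} --- thicken a geometric realization of the code's chain complex into a bounded-geometry handlebody with handles of indices $k-1,k,k+1$ and double it --- so the route is not genuinely different; the problem is that at each point where the theorem has real content, the sketch records the difficulty instead of resolving it.

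Concretely: (1) realizing the $\mathbb{F}_2$ chain complex by cells or handles is not routine ``padding with auxiliary cells''; attaching maps carry integral incidence data and framings, and the sparse-lift hypothesis exists precisely to supply a sparse integral lift from which those attaching maps can be built, so using it only as a proxy for ``bounded geometry'' misses its actual role. (2) The closing-up step is both unjustified and partly self-defeating. The double $DN$ of a regular neighborhood $N$ of the spine satisfies $H_*(DN) \cong H_*(N) \oplus H_*(N,\partial N)$, and the summand $H_{d-k}(N,\partial N) \cong H^{k}(N)$ is exactly the homology that carries the $Z$-distance: it is not ``stray'' and must not be surgered away; meanwhile surgering whatever genuinely extra classes appear ``without disturbing $H_k$, the volume, or the systoles'' is precisely the kind of operation that can create new small nontrivial cycles, and you give no argument. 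Relatedly, $dim(H_k(M)) = dim(\mathcal{C})$ is not ``by construction'': the summand $H_k(N,\partial N) \cong H^{d-k}(N)$ need not vanish throughout the range $4 \le k \le d-4$, which is why the application in the paper later restricts to $k \le d/2-1$. (3) Most importantly, the bound $sys_k(M) \gtrsim d_X(\mathcal{C})$ --- the heart of the statement --- is not proved. The mechanism in \cite{FH} (their Lemma 1.6.2, invoked in Section 4 of this paper) deforms any $k$- or $(d-k)$-cycle onto the handle spines with only a bounded-factor volume loss, so a homologically nontrivial cycle must meet at least $d(\mathcal{C})$ handles; you name this step as ``the technical heart'' but supply no construction. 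Your plan to obtain $sys_{d-k}(M) \gtrsim d_Z(\mathcal{C})$ from ``Poincar\'e duality combined with \cref{MClemma}'' also does not work as stated: simplicial systoles and cosystoles are only comparable up to constants, via an argument like \cref{cosyslemma}, and in \cite{FH} the $Z$-side bound again comes from the spine deformation applied to the dual handles rather than from duality formalities.
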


This is almost Theorem 1.2.1 in Freedman-Hastings. Their theorem states the additional requirement that $M$ is simply connected and they have $vol(M) \approx_V V$. If we drop the requirement of $M$ being simply connected then their method of proof gives $vol(M) \approx V$. Theorem 1.2.1 also does not state the relationship of the code to $dim(H_k(M))$ as we have done above. However, this bound can be deduced by tracing through their construction and using their Lemma 1.6.2.
%%%%%%%%%%%%%%
\section*{\centering 2. Coarse maps and nerve maps}

In this section we describe a property of a manifold which guarantees this its associated code is local.

\begin{definition} For a set $S$ in a simplicial complex, let $NS$ denote the simplices that intersect $S$. Let $X, Y$ be simplicial complexes. For $A>0$, we call a map $F: X \to Y$ \textbf{$A$-coarse} if for every simplex $\sigma \in X$ and every simplex $\sigma' \in Y$ we have,

$$|NF(\sigma)| \le A$$
$$|NF^{-1}(\sigma')| \le A$$

\noindent We will simply say that $F$ is a \textbf{coarse map} if it is $A$-coarse for some $A \lesssim 1$. 
\end{definition}
Coarse maps obey the following composition rule.
\begin{lemma} \label{complemma} For simplicial complexes $X,Y$ and $Z$ let $F: X \to Y$ be a $A_1$-coarse map and let $G: Y \to Z$ be a $A_2$-coarse map.
Then, $G \circ F$ is a $A_1A_2$-coarse map.
\end{lemma}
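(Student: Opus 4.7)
The plan is to verify the two defining inequalities of coarseness for $G \circ F$ directly, by passing through an intermediate simplex in $Y$ in each case. Throughout, the key observation is that every point of $Y$ lies in some simplex of $Y$, so any point of $F(\sigma)$ (respectively of $G^{-1}(\sigma'')$) can be ``grouped'' into a simplex of $Y$ that is neighborly to $F(\sigma)$ (respectively to $G^{-1}(\sigma'')$), and the relevant simplex count then factors through this intermediate set.

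For the forward bound, I would fix a simplex $\sigma \in X$ and show the containment
\[
N((G\circ F)(\sigma)) \;\subseteq\; \bigcup_{\tau \in NF(\sigma)} NG(\tau).
\]
Take any simplex $\rho$ of $Z$ intersecting $G(F(\sigma))$, pick $y \in F(\sigma)$ with $G(y) \in \rho$, and let $\tau$ be a simplex of $Y$ containing $y$. Then $\tau \in NF(\sigma)$ (witnessed by $y$), and $\rho \in NG(\tau)$ (witnessed by $G(y)$). Summing over $\tau$ and applying the two hypotheses gives at most $|NF(\sigma)|\cdot \max_\tau |NG(\tau)| \le A_1 A_2$ simplices.

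For the backward bound, I would fix a simplex $\sigma'' \in Z$ and prove the analogous containment
\[
N((G\circ F)^{-1}(\sigma'')) \;\subseteq\; \bigcup_{\tau \in NG^{-1}(\sigma'')} NF^{-1}(\tau).
\]
Take any simplex $\sigma$ of $X$ intersecting $F^{-1}(G^{-1}(\sigma''))$, choose $x \in \sigma$ with $F(x) \in G^{-1}(\sigma'')$, and let $\tau$ be a simplex of $Y$ containing $F(x)$. Then $\tau \in NG^{-1}(\sigma'')$ and $\sigma \in NF^{-1}(\tau)$. Summing over $\tau$ yields the bound $A_2 \cdot A_1$.

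There is no real obstacle here; the only point that requires a little care is making sure that every point of $Y$ can indeed be assigned to a simplex of $Y$ (this is fine, as simplicial complexes are covered by their simplices), and that the union-bound counting is valid even though the sets $NG(\tau)$ (or $NF^{-1}(\tau)$) may overlap — overlap only makes the left-hand side smaller than the sum, so the inequality is unaffected. Combining the two bounds gives that $G \circ F$ is $A_1 A_2$-coarse, as claimed.
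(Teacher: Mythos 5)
Your proposal is correct and follows essentially the same route as the paper's proof: both bound $N((G\circ F)(\sigma))$ by passing through the simplices of $Y$ meeting $F(\sigma)$ (and dually through $NG^{-1}(\sigma'')$ for the preimage bound), then apply the two hypotheses with a union bound. You merely spell out more explicitly the containment $N((G\circ F)(\sigma))\subseteq\bigcup_{\tau\in NF(\sigma)}NG(\tau)$ that the paper states as $N(G\circ F)(\sigma)\subset NG(NF(\sigma))$, so no substantive difference.
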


\begin{proof} By definition, for any simplex $\sigma \in X$, $|NF(\sigma)|\le A_1$, and $|NG(NF(\sigma))| \le A_2 |NF(\sigma)|$. Since $N(G \circ F)(\sigma) \subset NG(NF(\sigma))$ we have,

$$|N(G \circ F)(\sigma)|  \le A_1A_2$$

We also have, for any simplex $\sigma' \in Z$, $|NG^{-1}(\sigma')|\le A_2$, and $|NF^{-1}(NG^{-1}(\sigma'))| \le A_1 |NG^{-1}(\sigma')|$. Since $N(G \circ F)^{-1}(\sigma') \subset NF^{-1}(NG^{-1}(\sigma'))$ we have,

$$|N(G \circ F)^{-1}(\sigma')|  \le A_1A_2$$
\end{proof}

We now state the definition of a local complex and show how it corresponds to a local code.

\begin{definition} Give $\mathbb{R}^n$ a triangulation by standard unit $n$-simplices. We say that a $d$-dimensional simplicial complex $X$ is \textbf{local in dimension $n$} (of just \textbf{local} if $n$ is implicit) if there is a coarse embedding of $X$ into a ball of volume $vol(X)$ in $\mathbb{R}^n$ with this triangulation.
\end{definition}

The following lemma explains why we use ``local" to describe both codes and manifolds.

\begin{lemma} \label{locallemma} If $M$ is a local $d$-dimensioanl manifold and $\mathcal{C}$ is its associated code from \cref{MClemma}, then $\mathcal{C}$ is local.
\end{lemma}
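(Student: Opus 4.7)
The plan is to construct the required injection $f: \mathcal{B} \to Q^n(size(\mathcal{C}))$ from the coarse map $F: M \to B \subset \mathbb{R}^n$ provided by $M$ being local, where $B$ is a ball of volume $vol(M)$. Recall $\mathcal{B} = M(k)$ and $size(\mathcal{C}) = |M(k)|$. Because $M$ has bounded degree, $|M(k)| \approx vol(M) = |M(d)|$, so taking the constant $A$ in the definition of $Q^n$ large enough places $B$ inside $Q^n(size(\mathcal{C}))$.

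For each $\sigma \in M(k)$, I would let $V(\sigma)$ be the set of lattice vertices of the simplices in $NF(\sigma)$. By the $A_0$-coarseness of $F$ (with $A_0 \lesssim 1$), $|V(\sigma)| \lesssim 1$; conversely, for each lattice point $p$ the number of $\sigma \in M(k)$ with $p \in V(\sigma)$ is also $\lesssim 1$, since $p$ lies in boundedly many standard simplices of $\mathbb{R}^n$ and each such simplex $\sigma'$ has $|NF^{-1}(\sigma')| \le A_0$. Any choice function $f(\sigma) \in V(\sigma)$ will then automatically satisfy the check-proximity property: if $\sigma_i, \sigma_j \in M(k)$ are part of a common check of $\mathcal{C}$, then by the description of $H_1 = \partial_k$ and $H_2 = \delta^k$ in \cref{MClemma} they share a common simplex $\tau$ of $M$, either a $(k-1)$-face of both, or a $(k+1)$-simplex containing both. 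In either case either $\sigma_i \subset \tau$ or $\tau \subset \sigma_i$, so $F(\sigma_i) \cap F(\tau) \ne \emptyset$, and likewise for $\sigma_j$. Hence $NF(\sigma_i) \cup NF(\tau) \cup NF(\sigma_j)$ is a connected cluster of at most $3A_0$ simplices of $\mathbb{R}^n$, and any two of its lattice points lie within distance $\lesssim 1$.

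The remaining issue is injectivity, which the naive choice of $f$ need not satisfy. To resolve this I would refine the standard triangulation of $\mathbb{R}^n$ by a large constant factor $\lambda$ depending only on $n$ and $A_0$, replacing each unit simplex by $\lambda^n$ sub-simplices with vertices in $\tfrac{1}{\lambda}\mathbb{Z}^n$. After refinement each $V(\sigma)$ grows to size $\gtrsim \lambda^n$, while the per-point multiplicity remains $\lesssim 1$; for $\lambda$ large enough, Hall's marriage condition applies to the bipartite graph with edges $\sigma \leftrightarrow p$ given by $p \in V(\sigma)$, producing an injection $f$. Rescaling identifies $\tfrac{1}{\lambda}\mathbb{Z}^n$ with a standard integer lattice inside a slightly larger cube, which is absorbed by the constant $A$ in the definition of $Q^n$. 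The proximity bound is preserved up to constants under this rescaling.

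The principal obstacle is the injectivity step: one must verify Hall's condition after the lattice refinement, balancing the $\lambda^n$ lower bound on $|V(\sigma)|$ against the $O(1)$ per-point multiplicity. All other steps reduce to unwinding the coarseness definition of $F$ and the incidence pattern of the checks of the code associated to $M$ in \cref{MClemma}.
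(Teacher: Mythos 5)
Your overall strategy---refine the integer lattice by a constant factor, injectively assign to each $k$-simplex a nearby refined lattice point, and obtain the check-proximity bound from the fact that two qubits in a common check share a face or coface whose $F$-images overlap---is essentially the paper's own proof: there one picks, for each $\sigma \in M(k)$, a point of $\varepsilon\mathbb{Z}^n$ within distance $1$ of $F(c(\sigma))$ (with $c(\sigma)$ the barycenter and $\varepsilon$ chosen so that every unit ball contains at least $100A$ lattice points), injectively, and verifies the check condition exactly as you do. Your per-point multiplicity bound and your proximity argument via the connected cluster $NF(\sigma_i)\cup NF(\tau)\cup NF(\sigma_j)$ are both fine.

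The gap is in the injectivity step, precisely the step you flag as the principal obstacle. Your candidate set $V(\sigma)$ consists only of vertices of the refined simplices that actually meet $F(\sigma)$, and you claim $|V(\sigma)| \gtrsim \lambda^n$ after refinement. That claim is false in general: coarseness only constrains $F$ from above ($|NF(\sigma)| \le A_0$ and $|NF^{-1}(\sigma')| \le A_0$), so nothing prevents $F$ from crushing a $k$-simplex to a point, or to a set of diameter much smaller than $1/\lambda$; indeed, since $d > n$ here, the maps actually used in the paper (compositions with nerve maps and simplicial maps onto a $2$-complex) do collapse many simplices. For such a $\sigma$, $V(\sigma)$ has bounded size (about $n+1$ points) no matter how large $\lambda$ is, while up to $A_0$ distinct $k$-simplices may be mapped to essentially the same point; if $A_0$ exceeds that bounded size, Hall's condition fails for this family, so your refinement does not deliver injectivity. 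The fix is to enlarge the candidate sets: take $V(\sigma)$ to be all points of $\tfrac{1}{\lambda}\mathbb{Z}^n$ within distance $1$ of $F(\sigma)$ (or of $F(c(\sigma))$, as in the paper). Then $|V(\sigma)| \gtrsim \lambda^n$ unconditionally, the multiplicity bound survives (if $dist(p,F(\sigma)) \le 1$ then $F(\sigma)$ meets one of the boundedly many unit simplices within distance $1$ of $p$, each contributing at most $A_0$ simplices), so Hall's condition (or a greedy choice) gives the injection, and the proximity bound only degrades by an additive constant.
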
 

\begin{proof} Let $F:M \to \mathbb{R}^n$ be the given $A$-coarse map with $A \lesssim 1$ and $k$ a dimension between $1$ and $d$. For a simplex $\sigma \in M$ let $c(\sigma)$ denote its barycenter. Let $\varepsilon \lesssim 1$ be some constant small enough so that any unit ball in $\mathbb{R}^n$ contains at least $100A$ points of $\varepsilon \mathbb{Z}^n$. Then we can find an injective map,

$$\tilde f: M(k) \to \varepsilon \mathbb{Z}^n$$

\noindent so that for each $\sigma \in M(k)$, $dist(\tilde f(\sigma), F(c(\sigma))) \le 1$ and $|\tilde f(\sigma)| \lesssim V^{1/n}$. Define $f: M(k) \to \mathbb{Z}^n$ by $f(\sigma) = \varepsilon^{-1} \tilde f(\sigma)$. Observe that if $\partial_k(\sigma) \cap \partial_k(\sigma') \neq 0$ or $\delta(\sigma) \cap \delta(\sigma') \neq 0$, then we have,

$$dist(F(c(\sigma)), F(c(\sigma'))) \lesssim \varepsilon^{-1}A$$

Let $\mathcal{C}$ be the code associated to $M$ from the previous proposition. The basis of the middle vector-space of $\mathcal{C}$, $C_1$, corresponds to the $k$-simplices in $M$. Two basis elements are part of one check, if the boundaries or coboundaries of the corresponding simplices intersect. The above observation shows that if two bits in $\mathcal{C}$ are part of one check then they get mapped under $f$ to two points in $\mathbb{Z}^n$ that are $\lesssim 1$ apart. Thus, $\mathcal{C}$ is local.
\end{proof}

Later, we will use the following technical observation.

\begin{lemma}\label{enlargelemma} Suppose $M$ is a $d$-manifold with bounded geometry, $B$ is a ball of volume $V' \gtrsim 1$ in $\mathbb{R}^n$, and $F: M \to B$ is a coarse map. Then there is a local manifold $M''$ with $vol(M'') \approx V'$, $H_i(M'') = H_i(M)$, and $sys_i(M'') \approx sys_i(M)$ for $0<i<d-1$. 
\end{lemma}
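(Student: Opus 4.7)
The plan is to construct $M''$ by attaching a trivially--topologized $d$-manifold piece $P$ to $M$ so that $vol(M'') \approx V'$, while preserving middle-dimensional homology and systoles, and then combining $F$ with a coarse embedding of $P$ to exhibit locality. First, since $F$ is $A$-coarse with $A \lesssim 1$, the bounded-preimage condition immediately gives $vol(M) \le A \cdot vol(B) \lesssim V'$; set $V'' = V' - vol(M) \ge 0$.

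I would construct $P$ as a contractible triangulated $d$-manifold-with-boundary satisfying $\partial P \cong S^{d-1}$, with bounded geometry, $vol(P) \approx V''$, a linear isoperimetric inequality for $i$-cycles with $0 < i < d-1$, and admitting a coarse embedding into a ball of volume $\approx V''$ in $\mathbb{R}^n$. A concrete choice is a thin tube $P = D^{d-1} \times [0, T]$ with $T \approx V''$ and $D^{d-1}$ of small fixed diameter: deformation retraction onto the interval gives linear filling, and for $n \ge 2$ the interval can be realized as a coarsely embedded space-filling curve in the target $n$-ball, with the transverse $D^{d-1}$ factor folded into the remaining $n-1$ dimensions.

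Now define $M'' = (M \setminus D^d) \cup_{S^{d-1}} P$ by removing a small interior $d$-disk from $M$ and gluing $P$ in along the common boundary sphere. Then $M''$ is a closed triangulated $d$-manifold with $vol(M'') \approx vol(M) + V'' \approx V'$, and since $P$ is a topological $d$-disk, $M''$ is homeomorphic to $M$, giving $H_i(M'') = H_i(M)$ for every $i$. For $sys_i(M'') \approx sys_i(M)$ with $0 < i < d - 1$: the inequality $sys_i(M'') \le sys_i(M)$ holds since any optimal $i$-cycle of $M$ may be chosen to avoid the removed disk and remains a cycle in $M''$; conversely, given a minimal $i$-cycle $z$ in $M''$, the linear isoperimetric inequality in $P$ produces an $(i+1)$-chain $c$ in $P$ with $|c| \lesssim |z|$ filling the $P$-interior part of $z$, so that $z - \partial c$ is an $i$-cycle supported entirely in the $M$-part of size $\lesssim |z|$.

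To show $M''$ is local, I would build a coarse embedding $\phi : M'' \to B$ piecewise: on the $M$-part, modify $F$ into an injection on simplices by passing to a bounded refinement of $B$ and using the bounded-preimage property to assign each simplex of $M$ a distinct target cell within its existing image region; on the $P$-part, place the coarse embedding of $P$ into an $n$-ball of volume $\approx V''$ inside the region of $B$ not used by $\phi|_M$, which has volume $\approx V''$ because coarseness of $F$ forces $F(M)$ to occupy volume $\lesssim vol(M)$. The main obstacle is the simultaneous realization of all the properties required of $P$ (bounded geometry, linear isoperimetric, coarse embeddability into $\mathbb{R}^n$) and making the two pieces of $\phi$ match across the gluing sphere $S^{d-1}$; while the underlying tools --- tubes, space-filling curves, transverse folding --- are standard in coarse geometry, executing them on a bounded-geometry simplicial complex requires careful simplicial bookkeeping.
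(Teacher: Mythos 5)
Your approach is essentially the paper's: enlarge $M$ by a long, thin, topologically trivial $d$-dimensional piece of volume $\approx V'-vol(M)$, coarsely map that piece into $B$ by winding its $1$-dimensional core, and observe that homology is unchanged and systoles are preserved because the added piece carries no homology in degrees $0<i<d-1$. (The paper phrases this as a connect sum with a $d$-sphere $S$ that is the boundary of the $10$-neighborhood of a segment of length $V'$ in $\mathbb{R}^{d+1}$, which is the same operation as your ``remove a disk, glue in a long disk.'')

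One concrete point in your write-up does not work as literally stated, and it is exactly where the paper is more careful. You glue the \emph{entire} boundary $\partial P\cong S^{d-1}$ of the tube $P=D^{d-1}\times[0,T]$, which has $\approx V''$ simplices, to the boundary of a small removed disk in $M$, which has $\lesssim 1$ simplices; in the category of bounded-degree triangulations with the unit-side-length metric there is no such simplicial identification, and interpolating would either destroy bounded geometry or force uncontrolled subdivision of $M$ near the gluing locus. The paper avoids this by keeping both attaching regions of bounded size: it removes a single simplex $\sigma_M$ from $M$ and a single simplex $\sigma_S$ from the long thin sphere $S$, and joins them by a bounded tube $S^{d-1}\times[0,1]$ triangulated with $\lesssim 1$ simplices, so all of the large volume of $S$ sits away from the gluing. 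This also simplifies your systole step: your appeal to a linear isoperimetric inequality in $P$ is really a \emph{relative} filling statement (the $P$-part of a cycle $z$ has boundary on the gluing sphere, so $z-\partial c$ picks up a chain $w$ on that sphere whose size must also be controlled); with a bounded gluing sphere, as in the paper, one simply intersects $z$ with the $M$-side, fills the $(i-1)$-dimensional boundary inside $\partial\sigma_M$ with $\lesssim 1$ simplices, and uses the vanishing homology of the added piece to conclude the modified cycle is homologous to $z$ with $vol \lesssim vol(z)$. Finally, your effort to make the map injective and to place the image of $P$ in the region of $B$ unused by $F(M)$ is unnecessary: coarseness only requires boundedly many simplices over each unit ball, so the wound core of the new piece may overlap $F(M)$ harmlessly, which is how the paper combines $F$ with the collapse-and-wind map $F_1$.
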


\begin{proof} Notice, that from the definition of a coarse map we have $V' \gtrsim vol(M)$, but $V'$ can be much bigger than $vol(M)$. We'll expand $M$ to fill the remaining space in $B$ to make it local. Begin by defining a $d$-sphere $S$ which looks 1-dimensional and has volume roughly $V'$. More specifically, $S$ is the boundary of the $10$-neighborhood of the line segment $[0,V'] \times \{0\}^d \subset \mathbb{R}^{d+1}$. Topologically, $M''$ will be the connect sum of $M$ with $S$. As a simplicial complex we can make $M''$ as follows. Let $\sigma_M$ be a simplex in $M$ and $\sigma_S$ be a simplex in $S$. Let $T$ be $S^{d-1} \times [0,1]$. Then we can glue $T$ to $M$ by identifying $S^{d-1} \times \{0\}$ with $\partial \sigma_M$ and we can glue $T$ to $S$ by identifying $S^{d-1} \times \{1\}$ with $\partial \sigma_S$. We can then triangulate $T$ to be compatible with the identifications and so that $vol(T) \lesssim 1$. $M''$ is the resulting manifold, with $vol(M'') \approx vol(M)+vol(S) \approx V'$. 

\par There is a coarse map $F_1: S \to B$, which is given by collapsing $S$ to a 1-dimensional segment of length $V'$ and winding that segment inside $B$. Combining $F_1$ and $F$ we can make coarse map $F_2: M'' \to B$. Thus, $M''$ is a local manifold. Since $M'$ is the connect sum of $M$ and $S$, the Mayer-Vietoris sequence shows that
$H_i(M'') = H_i(M) \oplus H_i(S) = H_i(M)$, for $0<i<d-1$. To estimate the systole of $M'$, consider an $i$-cycle in $M'$ which we'll denote $z$. Let $z_M = z \cap (M - \sigma_M) \in C_i(M'')$. Then $\partial z_M$ is a $(i-1)$-chain which we can fill in $\partial \sigma_M$ with $\lesssim 1$ simplices. Let $z_M'$ be $z_M$ plus this filling, and notice that it is an $i$-cycle in $M$. Because $S$ has no homology, except in dimensions $0$ and $d$, $z_M'$ is  homologous to $z_M$. Since $vol(z_M') \lesssim vol(z)$ for any $i$-cycle $z$, we can conclude that $sys_i(M'') \approx sys_i(M)$.
\end{proof}
  
\section*{\centering 3. Nerves and nerve maps} 
  
In order to construct coarse maps from a manifolds to a low dimensional Euclidian spaces we will make use of a construction called a nerve. The author learned about nerves from \cite{G} pg 259. We begin by defining a metric on our simlicial complexes. 

\begin{definition} From now on any simplicial complex $X$ will be assumed to have a piecewise linear metric which makes each simplex into a standard simplex of side-length 1. Note that subdividing $X$ changes it metric, while leaving its topology the same. 
\par For two points $p, q \in X$, the distance between $p$ and $q$ in $X$ will be denoted by $dist_X(p,q)$. If $X,Y$ are two simplicial complexes, we say a map $F: X \to Y$ is \textbf{$L$-Lipschitz} if for all $p,q \in X$ we have,
$$dist_Y(F(p), F(q)) \le L\, dist_X(p,q)$$
$F$ is $L$-bilipschitz if we also have, $$dist_X(p,q) \le L\, dist_Y(F(p), F(q))$$

\end{definition}

Next, we define a special class of covers on our simplicial complexes. An open cover of $X$ is a collection of open subsets of $X$, so that every point of $X$ is inside at least one of these subsets.

\begin{definition}
Let $\mathcal{U} = \{U_i\}$ be an open cover of $X$. The \textbf{multiplicity} of $\mathcal{U}$ is the largest number of sets in the cover that have a mutual point of intersection. For any $U_i \in \mathcal{U}$, denote, $$U^r_i = \{p \in U_i: dist(p, \partial U_i) > r\}$$
The cover is called good if the following conditions hold for some positive universal constants $A_1, A_2$ and $D$.

\begin{enumerate}
\item The cover has multiplicity at most $A_1$
\item Each $U_i \in \mathcal{U}$ is contained in a ball of radius $A_2$
\item $X$ is covered by $\{U^{D}_i\}$
\end{enumerate}

\end{definition}

\noindent The next lemma shows that if $\mathcal{U}$ is a good cover we can find a partition of unity corresponding to it with small Lipschitz constants.

\begin{lemma} Let $M$ be a triangulated manifold. We can find a set of functions $\{\psi_i: M \to [0,1]\}_{U_i \in \mathcal{U}}$, with the following properties. 
\begin{enumerate}
\item Each $\psi_i$ is $0$ outside $U_i$.
\item Each $\psi_i$ is $\lesssim 1$-Lipschtiz.
\item For any $p \in M$ we have, $\sum_{U_i \in \mathcal{U}} \psi_i(p) = 1$
\end{enumerate}
\end{lemma}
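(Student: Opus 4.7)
The plan is to follow the standard partition-of-unity recipe: build a $1$-Lipschitz bump function $\phi_i$ supported in each $U_i$ and then normalize. The good cover conditions will control everything that needs controlling, and no manifold hypothesis is really needed beyond $M$ being a reasonable metric simplicial complex.

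First I would define, for each $U_i \in \mathcal{U}$, the function $\phi_i : M \to [0,\infty)$ by $\phi_i(p) = dist_M(p,\, M \setminus U_i)$. Since distance to any set is $1$-Lipschitz, so is $\phi_i$, and it vanishes outside $U_i$ as required by condition (1). Next, let $S(p) = \sum_i \phi_i(p)$. I would argue that $S$ is uniformly pinched between two positive constants: by the multiplicity condition at most $A_1$ summands are nonzero at any $p$, and each $\phi_i(p)$ is at most the diameter of $U_i$ and hence $\lesssim A_2$, so $S(p) \lesssim A_1 A_2$; conversely, condition (3) of a good cover provides some $i_0$ with $p \in U_{i_0}^D$, which forces $dist(p, M \setminus U_{i_0}) > D$, so $\phi_{i_0}(p) > D$ and hence $S(p) > D$.

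Then I would set $\psi_i = \phi_i / S$. Conditions (1) and (3) are then immediate. For the Lipschitz bound (2) I would expand
\[
|\psi_i(p) - \psi_i(q)| \le \frac{|\phi_i(p) - \phi_i(q)|}{S(p)} + \phi_i(q) \cdot \frac{|S(p) - S(q)|}{S(p)\, S(q)},
\]
and estimate each piece with the bounds already in hand: $|\phi_i(p) - \phi_i(q)| \le dist(p,q)$, $\phi_i(q) \lesssim A_2$, and $S(p), S(q) \ge D$.

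The one slightly delicate step, which I expect to be the main (minor) obstacle, is bounding $|S(p) - S(q)|$: $S$ is a sum of potentially many $1$-Lipschitz terms, and a priori its Lipschitz constant could blow up with the cardinality of the cover, whereas we need it to be globally $O(1)$-Lipschitz. The multiplicity bound rescues this, since for any two points $p, q$ the set of indices $j$ with $\phi_j(p) \neq 0$ or $\phi_j(q) \neq 0$ has size at most $2A_1$, and every other index contributes zero to $S(p) - S(q)$. Hence $|S(p) - S(q)| \le 2A_1\, dist(p,q)$, and assembling the estimates yields $|\psi_i(p) - \psi_i(q)| \lesssim dist(p,q)$ with a constant depending only on the universal constants $A_1, A_2, D$.
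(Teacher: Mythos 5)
Your proof is correct and follows essentially the same route as the paper: define distance-to-complement bump functions, normalize by their sum, and use the good cover conditions (bounded multiplicity, diameter bound, and the $\{U_i^D\}$ covering) to pin the denominator between universal constants and control the Lipschitz constant of the quotient. Your explicit treatment of the Lipschitz bound on $S$ via the multiplicity bound is just a spelled-out version of the step the paper states tersely.
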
 

\begin{proof} For each $i$, define $\tilde \psi_i: M \to \mathbb{R}$ to be $\tilde \psi_i(p) = dist(p, \partial U_i)$ on $U_i$ and 0 outside $U_i$. Then let,

$$\psi_i(p) = \frac{\tilde \psi_i(p)}{\sum_{U_k \in \mathcal{U}} \tilde \psi_k(p)}$$ 

\noindent These functions automatically satisfy the first and third condition. To check the second condition, we need to show that the norm of the gradient of $\psi_i$ is bounded by a universal constant. This follows from the following properties. Each $\tilde \psi_i(p)$ has Lipschitz constant 1. Since each piece is contained inside a ball of radius $A_2$ we have $|\tilde \psi_i(p)| \lesssim 1$. Since the cover has bounded multiplicity and since $\{U_i^{D}\}$ cover $M$, the denominator is bounded from below and above by a universal constant. 
\end{proof}

Suppose $\mathcal{U}$ is a good cover of $M$ with $(m+1)$ sets. Let $\Delta^m$ denote the $m$-dimensional simplex, which we can view as a subset of $\mathbb{R}^{m+1}$ by defining $\Delta^m = \{(x_1, x_2 \ldots x_{m+1}): \sum_{i=1}^{m+1} x_i =1, x_i \ge 0\}$. Then define a \textbf{nerve map} $\rho: M \to \Delta^m$ as follows. For $\{\psi_i\}$ as in the lemma above, we set,

$$\rho(p) = (\psi_1(p), \psi_2(p) \ldots \psi_{m+1}(p))$$
\noindent 
Notice that by the lemma above, each $\psi_i(p)$ has Lipschitz constant $\lesssim 1$, so the Lipschitz constant of $\rho$ is $\lesssim 1$ as well. The \textbf{nerve} of $M$ for the cover $\mathcal{U}$ is the smallest simplicial complex in $\Delta^m$ that contains $\rho(M)$. Notice that if the multiplicity of the cover is $A$, then $\rho(M)$ lies in the $(A-1)$ skeleton of $\Delta^m$, and so the nerve has dimension less than $A$. If each set of the cover intersects $\lesssim 1$ other sets in the cover, then we also see that the nerve has bounded degree.

%%%%%%%
\section*{\centering 4. Local manifolds from LDPC codes with lifts}

In this section we describe how to construct a local manifold from an LDPC code using the Freedman-Hastings construction. We then use this constructed manifold to prove \cref{Pthm} and \cref{Cthm}.

\begin{theorem} \label{mfldlemma} Let $\mathcal{C'}$ be an LDPC code that admits a sparse lift with $size(\mathcal{C'}) = V \gtrsim 1$. Fix dimensions $d \ge 11, 4 \le k \le d/2-1$ and $n \ge 3$. Then there is a triangulated $d$-manifold $M$ which is local and satisfies,

$$vol(M) \lesssim_V V^{\frac{n}{n-2}}$$
$$dim(H_k(M)) = dim(\mathcal{C'})$$
$$\min(sys_k(M), sys_{d-k}(M)) \gtrsim_V d(\mathcal{C'})V^{\frac{1}{n-2}}$$ 
\end{theorem}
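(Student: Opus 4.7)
The plan is to follow the program outlined in the introduction: manifold from code $\to$ nerve map $\to$ thick embedding $\to$ subdivision $\to$ enlargement.

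First, I would apply \cref{CMlemma} to $\mathcal{C'}$ with the prescribed $d$ and $k$ to obtain a bounded-degree triangulated $d$-manifold $M_0$ with $vol(M_0) \approx V$, $\dim H_k(M_0) = \dim(\mathcal{C'})$, and $\min(sys_k(M_0), sys_{d-k}(M_0)) \gtrsim d(\mathcal{C'})$. This reduces the problem to producing a local $d$-manifold from $M_0$ whose volume is at most $V^{n/(n-2)}$ and whose middle-dimensional systoles are at least $V^{1/(n-2)} d(\mathcal{C'})$, up to polylog factors.

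Using the machinery of Section~3, I would next pick a good cover $\mathcal{U}$ of $M_0$ of bounded multiplicity and form the associated nerve map $\rho : M_0 \to X$. Since $M_0$ has bounded degree, this can be arranged so that $X$ is a bounded-degree simplicial complex of low dimension with $vol(X) \approx V$, and so that $\rho$ is $O(1)$-Lipschitz and coarse. The third step is to invoke the modified thick-embedding theorem from Appendix~1 to produce a map $g : X \to B$ into a Euclidean ball $B \subset \mathbb{R}^n$ of radius $R \approx_V V^{1/(n-2)}$, with at most $\lesssim_V 1$ simplices in the preimage of any unit ball and with a uniform per-simplex stretch factor $L$. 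The composition $F = g \circ \rho : M_0 \to B$ then has bounded preimages of unit balls but is not yet coarse, because individual $d$-simplices can have image diameter as large as $L \gg 1$.

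Fourth, I would subdivide $M_0$ to a triangulated $d$-manifold $M_1$ on which $F$ becomes coarse, and then invoke \cref{enlargelemma} to enlarge $M_1$ to a local $d$-manifold $M$ of volume $\approx_V R^n = V^{n/(n-2)}$. The decisive quantitative point is that $F$ factors through the low-dimensional nerve $X$, so only the $\dim X$ directions of each $d$-simplex need to be refined; with this direction-selective subdivision the volume of $M_1$ grows by a factor $L^{\dim X}$ rather than the naive $L^d$, which is exactly what is needed for $vol(M) \lesssim_V V^{n/(n-2)}$. \cref{enlargelemma} preserves $H_k$ and both middle systoles up to the error stated in its conclusion, so $\dim H_k(M) = \dim(\mathcal{C'})$. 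For the systole lower bound I would then use the uniform stretching of $g$: any nontrivial $k$-cycle in $M_1$ represents a homology class pulled back from $H_k(M_0)$, and because the subdivision is uniform of scale $s$ in the directions where $g$ stretches, the simplex-count of such a cycle is bounded below by roughly $s^k$ times the simplex-count of the corresponding minimal cycle of $M_0$. Combined with $sys_k(M_0) \gtrsim d(\mathcal{C'})$ from Freedman--Hastings this yields $sys_k(M) \gtrsim_V V^{1/(n-2)}\,d(\mathcal{C'})$, and the same argument applied to $(d-k)$-cycles gives the matching bound on $sys_{d-k}(M)$.

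I expect the delicate point to be the combination of the direction-selective subdivision in step~4 and the systole lower bound: the exponent $n/(n-2)$ only comes out if one genuinely uses the factorization of $F$ through a low-dimensional complex rather than treating $F$ as a generic Lipschitz map from $M_0$, and obtaining a \emph{lower} bound on $sys_k(M_1)$ (as opposed to the trivial upper bound produced by subdividing a minimal cycle of $M_0$) is exactly the place where the uniform-stretch refinement of the Gromov--Guth map in Appendix~1 is essential. Everything else is assembly of the tools developed in Sections~1--3.
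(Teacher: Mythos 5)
Your overall skeleton (apply \cref{CMlemma}, build a nerve, feed it to the Appendix~1 embedding theorem, pull the subdivision back to the manifold as in \cref{sublemma}, then pad with \cref{enlargelemma}) is the same as the paper's, and your volume bookkeeping ($L^{\dim X}$ rather than $L^d$) is the right point. But there are two genuine gaps. The first is your second step: it is not true that bounded degree of $M_0$ lets you ``arrange'' a good cover whose nerve is low-dimensional. Any cover of a $d$-manifold by sets of bounded diameter has multiplicity at least $d+1$ by Lebesgue covering dimension, so a generic good cover of $M_0$ yields a nerve of dimension at least $d\ge 11$, and the Gromov--Guth theorem then gives radius $V^{1/(n-m)}$ with $m\ge d$, which is useless (indeed undefined) for the fixed small $n\ge 3$ in the statement. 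The exponent $1/(n-2)$ requires a \emph{2-dimensional} nerve, i.e.\ a good cover of multiplicity $3$, and this is exactly where the paper uses the specific structure of the Freedman--Hastings manifold: $M$ is the double of a handlebody $M_h$ assembled from pieces of types $0,1,2$ (one per basis vector of the chain complex), and \cref{multlemma} shows that after perturbing the attaching maps into quantitative general position via the Lov\'asz Local Lemma (\cref{gplemma}), the $D$-neighborhoods of these pieces form a multiplicity-$3$ good cover of $M_h$; the nerve map is then taken on $M_h$ and precomposed with the fold $M\to M_h$. Without some substitute for this step your construction does not produce the claimed exponent.

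The second gap is the systole lower bound. Your mechanism --- that a nontrivial $k$-cycle in the subdivided manifold has simplex count at least $s^k$ times that of a minimal cycle of $M_0$ --- is both unjustified (a minimal cycle of the subdivided manifold need not be a subdivision of any cycle of $M_0$, and the subdivision only refines in the $2$ nerve directions, not in $k\ge 4$ directions) and inconsistent with the bound you state, which carries only one factor of $R\approx V^{1/(n-2)}$. The paper's argument is different: Lemma~1.6.2 of Freedman--Hastings lets one deform any nontrivial $k$- or $(d-k)$-cycle to the spines of the handles, which forces it to meet $\gtrsim d(\mathcal{C'})$ disjoint pieces of $M$; pushing forward by the coarse map to $X'$, those pieces correspond to vertices of $X$ that are now distance $\approx R$ apart in the subdivided metric, and a connectivity lemma shows a connected subcomplex of $X'$ meeting $N$ simplices of $X$ contains $\gtrsim NR$ edges; coarseness of $F'$ then gives $vol(z')\gtrsim d(\mathcal{C'})R$. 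You correctly flag this as the delicate point, but ``uniform stretch'' alone does not supply the needed input that nontrivial cycles must spread across $\gtrsim d(\mathcal{C'})$ separated pieces; that topological step is essential and missing from your proposal.
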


\begin{proof} 
Let $M$ be the triangulated $d$-manifold constructed from $\mathcal{C'}$ using \cref{CMlemma}. The code $\mathcal{C'}$ consists of a 3-term chain complex with chosen basis representatives,

$$C_2 \to C_1 \to C_0$$

\noindent \cref{CMlemma} says that $vol(M) \approx V$, $dim(H_k(M)) = dim(\mathcal{C'})$, $sys_{k}(M) \gtrsim d(\mathcal{C'})$ and $sys_{d-k}(M) \gtrsim d(\mathcal{C'})$. $M$ is constructed as the double of some manifold, which will be denote by $M_h$. Let be the double cover be denoted as $F_h: M \to M_h$. The key fact about $M_h$ which we will need is the following.

\begin{lemma} \label{multlemma} $M_h$ has a good cover with multiplicity 3. 
\end{lemma}

\begin{proof}[Proof of \cref{multlemma}] For the proof we trace through the construction of $M_h$ from \cite{FH}. $M_h$ is constructed like a handle-body, by attaching together pieces of type 0, 1 and 2. Each piece is itself a handle-body which contains $\lesssim 1$ handles. There is a piece of type $i \in \{0,1,2\}$ in $M_h$ corresponding to each basis vector of $C_i$ and each piece has volume $\lesssim 1$.These pieces are glued together with certain attaching maps given in \cite{FH}. To make a good cover we would like to take a small neighborhood of each piece. However, to ensure that this cover is good we need to perturb the attaching maps so that two pieces of the same time are distance at least $D$ from each other, for some small $D \gtrsim 1$.

\par Each a piece of type 0 is a trivial bundle over a $(k-1)$-sphere $S^{k-1}$ and each piece of type 1 is a trivial bundle over an $S^k$ punctured in $\lesssim 1$ points. Every piece of type 1 attaches to several pieces of type 0 and at each piece of type 0, its attached in a small neighborhood of a $(k-1)$-sphere. We denote the union of the pieces of type 0 by $M_h(0)$, and denote the union of pieces of type 0 with pieces of type 1 attached by $M_h(1)$. After a finite subdivision the distance between the punctures in a piece of type 1 is at least $D$ for some $D \gtrsim 1$. This means that that the distance between two pieces of type 0 in $M_h(1)$ is at least $D$.

 Next, we want to make sure pieces of type 1 are not too close in $M_h(1)$. To ensure this we will apply the following quantitative general position argument to the attaching maps. Its proof is similar to the argument in \cite{GG} and it is also a warm-up to the proof in Appendix 1 of a more general quantitative embedding theorem.

\begin{lemma} \label{gplemma} Let $\tilde M$ be a triangulated $\tilde d$-manifold with bounded geometry. For any dimension $\tilde k \le (\tilde d-1)/2$, let $Y$ be a (not necessarily connected) $\tilde k$-dimensional simplicial complex with bounded geometry. Also let $G_1: Y \to \tilde M$ by any map which maps each simplex of $Y$ into a ball of radius $1/100$ in $\tilde M$, and $G_1^{-1}$ of any unit ball in $\tilde M$ intersects $\lesssim 1$ simplices in $Y$. Then there is an embedding $G_2: Y \to \tilde M$, which is homotopic to $G_1$, and for any two simplices $\sigma_1, \sigma_2$ in $Y$ which don't share a vertex we have, $dist(G_2(\sigma_1), G_2(\sigma_2)) \ge D$, for some small $D \gtrsim 1$. \end{lemma}

\begin{proof} We will find $G_2$ using a random perturbation of $G_1$. For each vertex $v$ in $Y$, let $G_2(v)$ be a uniformly random point in a ball of radius $1/100$ around $G_1(v)$. By the metric we endow $\tilde M$ with, each unit ball in $\tilde M$ is $\lesssim 1$-bilipschitz to a unit ball in $\mathbb{R}^{\tilde d}$. Since $G_2$ maps all the vertices of a simplex in $Y$ into a unit ball and each unit ball in $\tilde M$ is a convex set, we can extend $G_2$ linearly to all of $Y$. We can use a linear homotopy to show that $G_2$ is homotopic to $G_1$. Now, we show that $G_2$ satisfies the conditions of the theorem with non-zero probability.

\par For two simplices $\sigma_1, \sigma_2$ in $Y$, which don't share a vertex, and for a $D < 1$, let $Bad_D(\sigma_1, \sigma_2)$ denote the event that $dist(G_2(\sigma_1), G_2(\sigma_2)) < D$. Then the probability of $Bad_D(\sigma_1, \sigma_2)$ is bounded by the probability that $G_2(\sigma_1)$ and $G_2(\sigma_2)$ both intersect some ball of radius $D$. Let $B$ be some fixed ball of radius $D$ in $\tilde M$. The probability that $G_2(\sigma_1)$ intersects $B$ is $\lesssim D^{\tilde d- \tilde k}$. Note that $\sigma_1$ and $\sigma_2$ are disjoint, so their images under $G_2$ are independent. Thus, the probability that $G_2(\sigma_1)$ and $G_2(\sigma_2)$ both intersect $B$ is $\lesssim D^{2\tilde d- 2\tilde k}$. If they both intersect some ball, then this ball of radius $D$ must intersect a $(D+1/100)$-neighborhood of $G_1(\sigma_1)$. So using a union bound, we can bound the probability that $\sigma_1$ and $\sigma_2$ both intersect some ball of radius $D$ by $\lesssim D^{-\tilde d} D^{2\tilde d- 2\tilde k} \le D$. Thus, $Prob(Bad_D(\sigma_1, \sigma_2)) \lesssim D$.

\par Each event $Bad_D(\sigma_1, \sigma_2)$ is dependent on $\lesssim 1$ other such events, since $G_1^{-1}$ of a unit ball intersects $\lesssim 1$ simplices of $Y$. Thus, for some small $D \gtrsim 1$ we can apply the Lovasz-Local lemma \cite{EL}, to conclude that none of the events $Bad_D(\sigma_1, \sigma_2)$ occur with non-zero proability. This shows that we can pick a $G_2$ which satisfies the conditions in the theorem.
\end{proof}

After a finite subdivsion, we can ensure that the attaching maps map each simplex in the $(k-1)$-sphere of a piece of type 1 into a ball of radius $1/100$ in a piece of type 0. The \cite{FH} construction also says that $\lesssim 1$ pieces of type 1 attach to any fixed piece of type 0. Using these observaions we can apply \cref{gplemma} with the following inputs: $\tilde M = \partial M_h(0)$, $\tilde d = d-1$, $\tilde k = k-1$, $Y$ is all the $S^{k-1}$'s where pieces of type 1 attach to $\tilde M$, and $G_1$ is the attaching map of these spheres to $\tilde M$. \cref{gplemma} says that we can perturb the attaching map $G_1$ so that any two pieces of type 1 are distance $\ge D$ from each other, for some small $D \gtrsim 1$.  

\par Now we consider the attaching maps for pieces of type 2. Each piece of type 2 is a trivial bundle over a manifold $N_{\Gamma}$, where $N_{\Gamma}$ is a sphere $S^{k+1}$ with a neighborhood of a graph $\Gamma \subset S^{k+1}$ removed. The relevant property of $\Gamma$ for us is that it has bounded geometry and volume $\lesssim 1$. A piece of type 2 attaches to several pieces of type 0 and type 1, and the attaching happens in a small neighborhood of $\partial N_{\Gamma}$, which is $k$ dimensional. Attaching pieces of type 2 can potentially decrease the distance between pieces of type 1 or 0 in $M_h$ by a significant amount. After examining the attaching maps of pieces of type 2 in \cite{FH}, we can see that this doesn't happen if the following condition holds. Any two vertices of $\Gamma$ are distance $\ge D$ in $S^{k+1}$, and any two edges of $\Gamma$ that don't share a vertex are distance $\ge D$ in $S^{k+1}$. We can ensure this condition by perturbing the inclusion of $\Gamma$ in $S^{k+1}$. Apply \cref{gplemma} with $\tilde M = S^{k+1}$, $Y$ is a finite subdivision of $\Gamma$, and $G_1$ is the inclusion of $\Gamma$ into $S^{k+1}$. 

\par Next, we need to ensure that pieces of type 2 are not too close together. From \cite{FH},we have that $\lesssim 1$ pieces of type 2 attach to any fixed piece of type 0 or 1. Then we can apply \cref{gplemma} with the following inputs: $\tilde M = \partial M_h(1)$, $\tilde d = d-1$, $\tilde k = k$, $Y$ is several copies of $\partial N_{\Gamma}$ where all the pieces of type 2 attach to $\tilde M$, and $G_1$ is the attaching map of these $\partial N_{\Gamma}$ to $\tilde M$. \cref{gplemma} says that we can perturb the attaching map $G_1$ so that any two pieces of type 2 are distance $D$ from each other, for some small $D \gtrsim 1$. 

\par The above discussion shows that the $D$-neighborhoods of all the pieces form a good cover of $M$ of multiplicity 3. This completes the proof of the lemma.
\end{proof}

Let $X$ be the nerve of the good cover from \cref{multlemma} and denote the nerve map by,

$$\rho: M_h \to X$$

\noindent From the observations at the end of the section on nerves we have that $X$ is a 2-dimensional simplicial complex of bounded degree with volume $\lesssim V$. Our goal now is to subdivide $X$ and map this subdivision into $\mathbb{R}^n$. We do this using the following theorem which is a slight modification of the result from \cite{GG}. The proof can be found in Appendix 1. 

\begin{theorem} Let $X$ be an $m$-dimensional simplicial complex of degree $A \lesssim 1$ and volume $V$.  Fix a dimension $n > m$ and let $R = V^{\frac{1}{n-m}} \, log^{n+1}(V)$. For $V$ large enough, we can construct a subdivision of $X$ of bounded degree, which we call $X'$, and a coarse map, 

$$I: X' \to B^n(R)$$

\noindent where $B^n(R)$ is a ball of radius $R$ in $\mathbb{R}^{n}$. $X'$ is obtained by subdividing each $m'$-simplex of $X$ into $\approx R^{m'}$ standard simplices for each $1 \le m' \le m$. Furthermore, if $\sigma$ is an $m'$-simplex in $X'$ for some $1\le m' \le m$, then $I(\sigma)$ is $\lesssim 1$-bilipschitz to a standard unit $m'$-simplex.
\end{theorem}

We can now apply this theorem to our 2-complex $X$. This gives us a subdivision of $X$, which we'll call $X'$, and a coarse map $I: X' \to B^n(R)$ with $R \approx_V V^{\frac{1}{n-2}}$. However, we still need to subdivide to $M$ to construct a coarse map from it to $X'$. We do this using the following lemma.

\begin{lemma} \label{sublemma} Fix dimensions $d \ge m \ge 0$. Let $M$ be a triangulated $d$-manifold with bounded degree, $X$ a $m$-complex with bounded degree, and $\tilde F: M \to X^m$ a coarse map which is $L$-Lipschitz for some $L \lesssim 1$. Let $X'$ be a simplicial complex of bounded degree obtained by subdividing each $m'$-simplex of $X$, for $1 \le m' \le m$, into $\approx A^{m'}$ simplices for some number $A \ge 1$. Then we can make a subdivision of $M$, denoted $M'$, with $vol(M') \lesssim vol(M)A^{m}$ and construct a coarse map $F': M' \to X'$.
\end{lemma}

\begin{proof} First barycentrically subdivide $M$ several times, so that each new simplex has side-length roughly $(100^dL)^{-1}$, and call the resulting subdivided manifold $M_s$. Because $\tilde F$ is $L$-Lipschitz on $M$, we can use simplicial approximation to homotope $F$ to a simplicial map $F': M_s \to X$. Note that $dist_X(F(p), F'(p)) \lesssim 1$ for any $p \in M$. Sees Lemma 1.5 in \cite{GQ} for more details on this homotopy. Now $F'$ maps each simplex of $M_s$ linearly to some simplex of $X$. Since $\tilde F$ was coarse, $F'$ is coarse as well.
\par Since $F'$ is linear on each simplex, we can use $F'$ to pull-back the simplicial structure on $X'$ to a polyhedral subdivision of $M_s$, which we denote by $M'_s$. For instance, suppose a simplex $\tau \in M_s$ gets mapped to a simplex $\sigma \in X$ and $\sigma' \subset \sigma$ is a simplex in $X'$. Then $\tau \cap F^{-1}(\sigma')$ is a polyhedron in $M'_s$. 

\par Each polyhedron lies inside a simplex of $M_s$ and has $\lesssim 1$ vertices. Since $M$ and $X'$ have bounded degree, we can subdivide each polyhedron of $M'_s$ into $\lesssim 1$ simplices, to get a triangulation of bounded degree. Let $M'$ be this triangulated manifold. Since we obtained $X'$ by subdividing each simplex of $X$ into $A^{m}$ simplices and each polyhedron in $M'_s$ was subdivided $\lesssim 1$ times, we have $vol(M') \lesssim vol(M)A^{m}$.
\end{proof} 

Recall that $\rho$ is $\lesssim 1$-Lipschitz, since it is a nerve map of a good cover, and that $X'$ is obtained by subdividng each $m'$-simplex of $X$ into $\approx R^{m'}$ simplices. This means $\tilde F = \rho \circ F_h$  and $X'$ satisfy the hypothesis of \cref{sublemma}. Applying \cref{sublemma} to this $\tilde F$ and $X'$, we get a subdivision of $M$, denoted $M'$, so that $vol(M') \lesssim R^2 vol(M)$ as well as a coarse map $F': M' \to X'$. Finally, we have a map $F = I \circ F': M' \to B^n(R)$, which is coarse by \cref{complemma}.

\par Next we'll analyze the behavior of the systoles in $M'$. Since each piece used to construct $M_h$ is a handlebody and $M$ is the double of $M_h$, we can identify a set of spines of these handlebodies in $M$. Lemma 1.6.2 in \cite{FH} shows that any singular Lipschitz $k$ or $d-k$ cycle in $M$ is homologous to a cellular cycle supported on these spines, which we'll denote by $spine(z)$. Lemma 1.6.2 is proved by deforming the part of $z$ inside a handle to that handle's spine. So if $z$ did not intersect some piece of $M$, then $spine(z)$ does not intersect that piece either. Now the relationship of $M$ to the code it is associated to says that if $spine(z)$ is not a boundary, then it must contain at least $d(\mathcal{C'})$ spines in its support. In particular, if we let $z'$ be a simplicial $k$ or $d-k$ cycle of $M'$ that is not a boundary, then $z'$ must intersect at least $d(\mathcal{C'})$  pieces of $M$. In fact, since each piece of $M_h$ intersects $\lesssim 1$ other pieces and since each piece consists of $\lesssim 1$ simplices, we can ensure that $z'$ intersects $\gtrsim d(\mathcal{C'})$ disjoint pieces of $M$.

\par Lets now consider $F'(z')$ as a subset in $X$ (rather than $X')$. From the construction of the nerve map $\rho$, we see that $\tilde F(z')$ intersects $\gtrsim d(\mathcal{C'})$ simplices in $X$. And since $dist_X(F'(p),\tilde F(p)) \lesssim 1$ for any $p \in M'$, we also see that $F'(z')$ intersects $\gtrsim d(\mathcal{C'})$ simplices in $X$. If we assume that $z'$ is connected, then by the following lemma we have that $F'(z')$ must intersect $\gtrsim d(\mathcal{C'})R$ simplices of $X'$. 

\begin{lemma} Let $Y$ be a connected sub-complex of $X'$ that intersects at least $N$ simplices of $X$. Then $Y$ contatins $\gtrsim NR$ 1-simplices of $X'$.
\end{lemma}

\begin{proof} It is enough to consider the case when $Y$ is a 1-complex. We have a metric on $X'$ where each simplex has side-length 1.  Note that two vertices of $X$ are roughly distance $R$ apart under this metric. Consider a cover of $Y$ by balls of radius $R/100$ centered on $Y$, so that the center of each ball does not lie in any other ball. As $X$ has bounded degree, each ball intersects $\lesssim 1$ simplices of $X$. Thus, there must at least $N' \gtrsim N$ balls in the cover. Since $Y$ is connected and the centers of the balls are $\gtrsim R/100$ distance from each other, we have $|Y| \gtrsim N' R/200 \gtrsim NR$.
\end{proof}

Since $F'$ is a coarse map we have $vol(z') \gtrsim vol(F'(z')) \gtrsim d(\mathcal{C'})R$. Since $z'$ was any $k$ or $d-k$ cycle in $M'$ that was not a boundary, we have shown that $sys_k(M')$ and $sys_{d-k}(M')$ are $\gtrsim d(\mathcal{C'})R$. Putting everything together we have,
$$vol(M') \lesssim_V V^{1+\frac{2}{n-2}}$$
$$dim(H_k(M')) = dim(\mathcal{C'})$$
$$sys_k(M') \gtrsim_V d(\mathcal{C'})V^{\frac{1}{n-2}}$$
$$sys_{d-k}(M') \gtrsim_V d(\mathcal{C'})V^{\frac{1}{n-2}}$$

\par Although $F$ is coarse, $M'$ is not local because $F$ maps into a ball of volume roughly $R^n = V^{\frac{n}{n-2}} log(V)^{n(n+1)}$, while $vol(M')$ is roughly $V^{\frac{n}{n-2}} log(V)^{2(n+1)}$. Apply \cref{enlargelemma} to get a local manifold $M''$ from $M'$. $M''$ has the same homology as $M'$ and roughly the same systoles as $M'$ in dimension $k$ and $d-k$, and so satisfies the conditions stated in the theorem.
\end{proof}
 
We can now use this result to prove \cref{Pthm}.

\begin{proof}[Proof of \cref{Pthm}] Let $M$ be the local manifold constructed in \cref{mfldlemma} with $vol(M) \lesssim_V V^{\frac{n}{n-2}}$ and $\min(sys_k(M), sys_{d-k}(M)) \gtrsim_V d(\mathcal{C'})V^{1/n}$.  In order to apply \cref{MClemma} we need an estimate on the cosystole of $M$ for which we use the following lemma.

\begin{lemma} \label{cosyslemma} Let $M$ be a $d$-dimensional triangulated manifold with bounded degree. Then for $0 < k < d$ we have,
$$sys^{k}(M) \gtrsim sys_{d-k}(M)$$
\end{lemma}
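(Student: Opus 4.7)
The plan is to produce, from a $k$-cocycle $z \in Z^k(M)$ realizing $sys^k(M)$ and not a coboundary, a $(d-k)$-cycle $w \in Z_{d-k}(M)$ that is not a boundary and satisfies $vol(w) \lesssim vol(z)$. Given such a $w$ we immediately get $sys_{d-k}(M) \le vol(w) \lesssim sys^k(M)$. The construction is chain-level Poincar\'e duality, which over $\mathbb{F}_2$ needs no orientability hypothesis; bounded degree is used to keep the combinatorial blow-up constant.

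First I would pass to the barycentric subdivision $sd(M)$ and invoke the classical dual cell construction. For each $k$-simplex $\sigma$ of $M$, let $\sigma^* \in C_{d-k}(sd(M))$ be the sum of all $(d-k)$-simplices $[\hat{\tau}_1, \ldots, \hat{\tau}_{d-k}]$ of $sd(M)$ indexed by strictly ascending chains $\sigma \subsetneq \tau_1 \subsetneq \cdots \subsetneq \tau_{d-k}$ of simplices of $M$ (with $\tau_{d-k}$ of dimension $d$). Since $M$ has bounded degree and $d - k$ is a dimensional constant, the number of such chains is bounded by a constant $A \lesssim 1$, so $|\sigma^*| \le A$ uniformly in $\sigma$. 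A direct check gives $\partial \sigma^* = \sum_{\tau \supset \sigma,\, \dim \tau = k+1} \tau^*$ (all signs are $+1$ mod $2$), so the linear extension $D : C^k(M) \to C_{d-k}(sd(M))$ defined by $D(z) = \sum_{\sigma} z(\sigma) \sigma^*$ satisfies $\partial \circ D = D \circ \delta$. Hence $D$ sends cocycles to cycles and coboundaries to boundaries and induces the standard Poincar\'e duality isomorphism $H^k(M; \mathbb{F}_2) \cong H_{d-k}(M; \mathbb{F}_2)$.

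Second, I would transfer back from $sd(M)$ to $M$ via a simplicial approximation $r : sd(M) \to M$ of the identity on $|M| = |sd(M)|$. After fixing a total order on the vertices of $M$, define $r$ on a vertex $\hat{\tau}$ of $sd(M)$ to be the maximal vertex of $\tau$; a simplex $[\hat{\tau}_0, \ldots, \hat{\tau}_p]$ of $sd(M)$ arising from a chain $\tau_0 \subsetneq \cdots \subsetneq \tau_p$ then maps into $\tau_p$, so $r$ is simplicial. Because $r$ is homotopic to the identity it induces the identity on homology. On chains, $r$ sends each $(d-k)$-simplex of $sd(M)$ either to a $(d-k)$-simplex of $M$ or (when two of the images coincide) to $0$, giving $vol(r_*(c)) \le vol(c)$ for any chain $c$.

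Setting $w = r_*(D(z))$ completes the proof: $w$ is a $(d-k)$-cycle, $vol(w) \le vol(D(z)) \le A\, vol(z)$, and $[w] \ne 0 \in H_{d-k}(M; \mathbb{F}_2)$ since both $D$ and $r_*$ are homology isomorphisms and $[z] \ne 0$. The one place the hypothesis is genuinely used is the uniform bound $|\sigma^*| \le A$ from bounded degree; everything else is formal Poincar\'e duality at the chain level, so no step is particularly delicate. The main thing to be careful about is verifying the boundary formula for $\sigma^*$ over $\mathbb{F}_2$ and confirming that $r$ really is a simplicial approximation of the identity under the standard identification of $|sd(M)|$ with $|M|$.
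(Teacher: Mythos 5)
Your construction is correct and rests on the same core idea as the paper's proof: represent the Poincar\'e dual of a $k$-cocycle by the dual $(d-k)$-cells, observe that this is a chain map over $\mathbb{F}_2$ with volume controlled by the bounded degree, and then transfer the resulting $(d-k)$-cycle back to the original triangulation. Where you differ is in that last transfer step. The paper works with the dual polyhedral decomposition $\tau'$ and deforms the dual chain back into $\tau$ by an iterated radial projection from interior points (a Federer--Fleming-style pushout, with bounded degree controlling the volume blow-up); you instead stay simplicial inside $sd(M)$ and push forward by the last-vertex simplicial approximation $r: sd(M)\to M$, which is volume non-increasing on chains by inspection and induces the identity on homology. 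Your route is arguably cleaner, since the volume bound and the homological non-triviality of $w=r_*(D(z))$ come for free from standard facts about simplicial approximations of the identity; the paper's projection argument is less canonical but is reused verbatim for the systole upper bound in Appendix 2, which is presumably why it is set up that way there.

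Two small corrections. First, the dual $(d-k)$-simplices must include $\hat\sigma$ as a vertex: they are $[\hat\sigma,\hat\tau_1,\dots,\hat\tau_{d-k}]$ for chains $\sigma\subsetneq\tau_1\subsetneq\cdots\subsetneq\tau_{d-k}$. As written, $[\hat\tau_1,\dots,\hat\tau_{d-k}]$ has dimension $d-k-1$ and sums to the boundary sphere of the dual block, for which your boundary formula fails (its boundary vanishes mod $2$); with $\hat\sigma$ included, the formula $\partial\sigma^*=\sum_{\tau\supset\sigma,\ \dim\tau=k+1}\tau^*$ holds mod $2$, since the interior omissions cancel in pairs and the top omission cancels because each $(d-1)$-simplex lies in exactly two $d$-simplices. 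Second, that last cancellation is the one place where you use that $M$ is a closed manifold (not merely a bounded-degree complex); this is guaranteed by the paper's definition of a triangulated manifold, but it deserves to be said explicitly since it is where the manifold hypothesis, as opposed to bounded degree, actually enters.
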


\begin{proof} The follwowing argument is similar to a proof of Poincare Duality which relies on the dual polyhedral structure of a triangulation. Some more details can be found in Chapter 5.2 of \cite{S}. Let $\tau$ be the given triangulation of $M$. By our definition of a triangulated manifold, the simplices adjacent to a vertex form a set that is homeomorphic to $\mathbb{R}^d$. Let $\tau'$ be the subdivision of $M$ into polyhedra which is dual to $\tau$. That is, $\tau'$ has a vertex in every $d$-simplex of $\tau$, an edge that transversely crosses each $(d-1)$-simplex of $\tau$, and so on. To define $\tau'$ more formally suppose we have a $k$-simplex $\sigma \in \tau$ and $d$-simplex $\Delta \in \tau$ such that $\sigma \subset \Delta$. The part of the polyhedron dual to $\sigma$ inside $\Delta$ is the convex hull of the barycenters of all simplicies in $\Delta$ that contain $\sigma$.

\par For each $k$-cochain $z$ in $\tau$, define $D(z)$ be the set of all $(d-k)$-polyhedra in $\tau'$ which intersect the simplices in the support of $z$. By the construction of $\tau'$, we have $vol(D(z)) = vol(z)$, where volume in $\tau'$ is measured by counting the number of polyhedron in the support. Notice that a $(d-k-1)$-polyderon of $\tau'$ is in $\partial D(z)$ if and only it intersects a $(k+1)$-simplex in $\delta^{k}(z)$. Thus, $\partial D(z) = D(\delta^{k}(z))$. So we see that $D$ is a chain map from the $k$-cochains in $\tau$ to $(d-k)$-chains in $\tau'$.

\par We now show how to deform any $(d-k)$-chain $w \in \tau'$, to a $(d-k)$-chain in $\tau$ using a projection argument. Suppose $\sigma$ is an $m$-simplex in $\tau$, with $m>d-k$ or $\sigma$ is a $(d-k)$-simplex in $\tau$ which is not entirely contained in $w$. Then there is some point $p \in \sigma^{\circ} - w$. Let $\pi_p: (\sigma - p)  \to \partial \sigma$ be the projection map in 
$\sigma$ centered at $p$. Then the following chain is homologous to $w$ and does not intersect $\sigma^{\circ}$,

$$w - (w \cap \sigma) + \pi_p(w \cap \sigma)$$

Applying this procedure several times we eventually end up with a chain in $\tau$. Observe that since the triangulation $\tau$ has bounded degree, this procedure increases the volume of the original chain by at most a constant factor. Now use this procedure, to deform $D(z)$ to a chain $P(z)$ in $\tau$. Because of the observation we have  $vol(P(z)) \lesssim vol(D(z)) = vol(z)$.

\par Let $z$ be a cocycle that is not a coboundary with $vol(z) = sys^k(M)$. Because $D$ is a chain map we have that $P(z)$ is a cycle that is not a boundary, with $vol(P(z)) \lesssim sys^k(M)$. This gives the desired bound on $sys_{d-k}(M)$. 
\end{proof}

By \cref{cosyslemma} we have that $sys^k(M) \gtrsim_V d(\mathcal{C'})V^{1/n}$. So using \cref{locallemma} we see that there is a local code $\mathcal{C}$, with $size(\mathcal{C}) \lesssim_V V^{\frac{n}{n-2}}$, $dim(\mathcal{C}) = dim(\mathcal{C'})$, and $d(\mathcal{C}) \gtrsim_V d(\mathcal{C'})V^{\frac{1}{n}}$. This completes the construction of the local code.
\end{proof}

\noindent We can now use this result to prove \cref{Cthm}.

\begin{proof}[Proof of \cref{Cthm}] Let $\mathcal{C'}$ be an asymptotically good LDPC code that admits a sparse lift, for instance the code from \cite{PK} or \cite{LZ}. Apply \cref{Pthm} to $\mathcal{C'}$ to get a local code $\mathcal{C}$. The properties of $\mathcal{C'}$ from \cref{Pthm} give the desired lower bounds on $dim(\mathcal{C})$ and $d(\mathcal{C})$. The upper bounds follow from the results in Appendix 2.
\end{proof}

%%%%%%%
\section*{\centering Appendix 1: Quantitive embedding result}

In this appendix we give a slight variation on the quantitive embedding theorem of Gromov-Guth that we use in the proof of \cref{mfldlemma}. The proof ideas are all from Theorem 2.1 in \cite{GG}.

\begin{theorem*} [7] Let $X$ be an $m$-dimensional simplicial complex of degree $A \lesssim 1$ and volume $V$.  Fix a dimension $n > m$ and let $R = V^{\frac{1}{n-m}} \, log^{n+1}(V)$. For $V$ large enough, we can construct a subdivision of $X$ of bounded degree, which we call $X'$, and a coarse map, 

$$I: X' \to B^n(R)$$

\noindent where $B^n(R)$ is a ball of radius $R$ in $\mathbb{R}^{n}$. $X'$ is obtained by subdividing each $m'$-simplex of $X$ into $\approx R^{m'}$ standard simplices for each $1 \le m' \le m$. Furthermore, if $\sigma$ is an $m'$-simplex in $X'$ for some $1\le m' \le m$, then $I(\sigma)$ is $\lesssim 1$-bilipschitz to a standard unit $m'$-simplex.
\end{theorem*}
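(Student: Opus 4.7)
My plan is to follow the probabilistic method of \cite{GG}, with the modification that we track not the thickness of an embedding (which fails when $n \le 2m$, since two $m$-simplices generically intersect in $\mathbb{R}^n$), but rather a density bound that translates into the coarseness of the resulting map. The construction proceeds in two stages: a random piecewise linear map $\phi: X \to B^n(R)$ obtained by randomizing the vertex positions, followed by a subdivision of $X$ aligned with a standard unit triangulation of $B^n(R)$.

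For the first stage, I would place each vertex of $X$ independently and uniformly at random in the ball $B^n(R/2)$, extending $\phi$ linearly on each simplex. I then introduce two families of bad events: a shape event $B_{\mathrm{sh}}(\sigma)$ for each simplex $\sigma \in X$, recording that $\phi(\sigma)$ is too degenerate to be $\lesssim 1$-bilipschitz to a scaled standard $m'$-simplex of side $\approx R$; and a density event $B_{\mathrm{den}}(y)$ for each point $y$ in a unit grid on $B^n(R)$, recording that more than some absolute constant $K$ simplex images meet the unit ball at $y$. A routine tube-volume computation shows that the expected number of $m'$-simplex images meeting a fixed unit ball is at most $V/R^{n-m}$, which by the choice $R = V^{1/(n-m)}\log^{n+1}(V)$ is $\lesssim \log^{-(n+1)(n-m)}(V)$. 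Invoking the Lovasz Local Lemma in the same spirit as in \cref{gplemma}, using the bounded-degree structure of $X$ and the local nature of the dependencies among bad events, then yields a positive-probability choice of $\phi$ for which no bad event occurs.

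For the second stage, the shape-event bound lets me identify each $\phi(\sigma)$ bilipschitzly with a scaled standard $m'$-simplex of side $\approx R$. I would pull back the standard unit $m'$-triangulation of this scaled simplex to subdivide $\sigma$ into $\approx R^{m'}$ sub-simplices, and perform this consistently across $X$ to form $X'$ (whose degree is bounded because $X$ had bounded degree and each simplex is subdivided in a uniform combinatorial pattern). The map induced by $\phi$ gives $I: X' \to B^n(R)$, restricting to a $\lesssim 1$-bilipschitz map from each sub-simplex to a standard unit $m'$-simplex. Both coarseness conditions then follow: each sub-simplex image is a unit-sized simplex and hence meets $\lesssim 1$ simplices of the standard triangulation on $B^n(R)$, while the density bound from the first stage caps the number of sub-simplices of $X'$ whose image meets any fixed unit simplex in $B^n(R)$ by a universal constant.

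The main obstacle is calibrating the first stage so that the Lovasz Local Lemma delivers a truly constant $K$, rather than the $K \sim \log V$ bound that a crude Chernoff-plus-union-bound over the $\approx R^n$ unit balls would give. The polylog slack built into $R$ is tuned precisely so that the expected per-ball density is small enough, and the dependency graph among the events $B_{\mathrm{den}}(y)$ is sparse enough (each one effectively sees only vertices whose typical image could reach near $y$), that LLL succeeds with a universal constant $K$ depending only on $n$ and $m$.
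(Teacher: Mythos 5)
There is a genuine gap at the heart of your first stage: a one-shot application of the Lov\'asz Local Lemma cannot deliver a constant per-ball density $K$ in your setup. Because every vertex is placed uniformly in all of $B^n(R/2)$, every simplex image has positive probability of meeting any given unit ball, so each event $B_{\mathrm{den}}(y)$ depends on essentially all of the random vertex positions; the dependency graph you invoke (``each event effectively sees only vertices whose typical image could reach near $y$'') is not sparse at all. Reformulating the bad events as ``these $K+1$ vertex-disjoint simplices all meet a common unit ball'' --- the only way to get events with genuinely local dependencies --- does not save the argument: such an event has probability $\lesssim R^{\,n-(n-m)(K+1)} \approx V^{\frac{n}{n-m}-(K+1)}$ up to polylogs, while the number of other $(K+1)$-tuples sharing a vertex with a given one is $\approx V^{K}$, so the LLL product $p\,d \approx V^{\frac{m}{n-m}}$ is polynomially large no matter how large you take $K$: each extra simplex in a tuple gains a factor $V^{-1}$ in probability but costs a factor $V$ in dependency degree. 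The polylog slack in $R$ cannot repair this; what a single random placement does give (Chernoff plus a union bound over the $\approx R^{n}$ unit balls, exploiting that same-colored simplices share no vertices) is a density of order $\log V$ per unit ball, and that is essentially the best available at this stage. Your final coarseness claim rests precisely on the constant-$K$ bound, so the proof as proposed does not close. (Your shape events are a lesser issue: the paper avoids them altogether by sending the vertices, via a proper coloring, to $A+1$ pairwise far-separated caps on the boundary sphere, which makes every simplex image $\lesssim 1$-bilipschitz to a standard simplex of side $V^{\frac{1}{n-m}}$ deterministically.)

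The paper's proof is built around exactly this obstruction and is a two-scale argument. Stage one is the random map into $B^n(V^{\frac{1}{n-m}})$ just described, yielding only the bound of $\lesssim \log V$ simplices per unit ball. One then rescales the image by $s=\delta\log^{n+1}(V)$ (this rescaling, not a density calculation, is the sole source of the polylog in $R$) and subdivides $X$ so that each new simplex maps into a ball of radius $s$. Stage two is a second random perturbation, of the new vertices only and only at scale $s$, again organized by a coloring with far-apart caps so the images stay $\lesssim 1$-bilipschitz to simplices of side $s$. Now the bad events --- $(n+1)$ same-colored simplices meeting a common unit ball --- have probability $\lesssim s^{n}(s^{m-n})^{n+1}\le s^{-1}$, and, crucially, each depends on only $\lesssim \log^{n+1}(V)$ others, because the perturbation is local at scale $s$ and stage one already caps the number of simplices per $s$-ball by a polylog. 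Only with this polylogarithmic dependency degree does the Local Lemma apply, giving the constant bound $(n+1)A''$ on unit-ball preimages; a final subdivision at scale $s$ then produces $X'$ and $I$. To salvage your outline you would need to insert this intermediate $\log V$ density bound, the rescaling by $s$, and a second, local perturbation; the single-stage LLL you propose is the step that fails.
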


\begin{proof} We begin by mapping $X$ into an $n$-ball of radius $V^{\frac{1}{n-m}}$. Choose a coloring of the vertices of $X$ with $A+1$ colors, so that no two adjacent vertices are the same color.
Next, we select $A+1$ disjoint caps in $\partial B^n(V^{\frac{1}{n-m}})$. We can choose them so that each cap has volume $\varepsilon V^{\frac{m}{n-m}}(A+1)^{-1}$ for some small constant $\varepsilon \gtrsim 1$, and so that any two caps are $\gtrsim V^{\frac{1}{n-m}}$ distance apart. We now define a random map $I_0:X \to B^n(V^{\frac{1}{n-m}})$. That is, the image of each simplex under $I_0$ will be a random variable, as follows. $I_0$ maps each vertex of color $i \in \{1, 2 \ldots A+1\}$ to a uniformly random point in the $i^{th}$ cap. Then extend $I_0$ by linearity to all other simplices of $X$. Because the caps are roughly $V^{\frac{1}{n-m}}$ distance apart, the image of each simplex is $\lesssim 1$-bilipschitz to a standard simplex of side-length $V^{\frac{1}{n-m}}$.
\par
Now, color all the $m$-simplices with $\lesssim A$ colors, so that no two simplices that share a vertex are the same color.
Note that the image of the simplices of the same color can be treated like independent events, as they share no vertices. The probability that an $I_0$-image of a simplex in $X$ intersects some fixed unit ball, is roughly the probability that a fixed simplex intersects a uniformly random ball in $B^n(V^{\frac{1}{n-m}})$. This probability is roughly,

$$\frac{vol(m\text{-simplex of side-length  } V^{\frac{1}{n-m}})}{vol(B^n(V^{\frac{1}{n-m}}))} = V^{-1}$$

Thus, the expected number of $m$-simplices of the same color that intersect some unit ball $B^n(1)$ is $\lesssim 1$. A Chernoff bound then shows that the probability that $\gtrsim log(V)$ simplices of the same color intersect some $B^n(1)$ is at most $e^{-C'log(V)}$, for some small constant $C'>0$. Therefore, with non-zero probability there is a map $I_1: X \to B^n(V^{\frac{1}{n-m}})$, so the the pre-image of any unit ball intersects $\lesssim log(V)$ simplices. 

\par Let $s = \delta\, log^{n+1}(V)$ be a scaling factor, where $\delta > 0$ is some tiny universal constant. Scale the image of $I_1$ by a factor of $s$, to get a map $I_2: X \to B^n(sV^{\frac{1}{n-m}})$. Note that $I_2^{-1}$ of any ball of radius $s$ intersects $\lesssim log(V)$ simplices of $X$. We can now subdivide each $m'$-simplex of $X$ into roughly $V^{\frac{m'}{n-m}}$ simplices, so that $I_2$ maps each new simplex into a ball of radius $s$. Call the subdivided complex we obtain $X'$. The pre-image of any ball of radius $s$ still intersects $\lesssim log(V)$ simplices of $X'$. 

\par The next step is to randomly perturb the map $I_2$ at scale $s$. Color the vertices of $X'$ in $A' \lesssim 1$ colors, so that no two intersecting simplices of $X'$ contain 2 vertices of the same color. Also select $A'$ caps in a sphere of radius $s$ which are $\gtrsim s$ distance apart from each other. For each vertex $v \in X'$ of color $i$, let $y(v)$ be a random vector in the $i^{th}$ cap. Define $I_3(v) = I_2(v)+y(v)$, and extend linearly to all the simplices of $X'$ to get a map $I_3: X' \to B^n(sV^{\frac{1}{n-m}})$. The coloring ensures that after this perturbation is done, the $I_3$-image of each simplex is $\lesssim 1$-bilipschitz to a standard simplex with side-length $s$.
\par
 $I_3$ is still defined as a random map, and our goal is to show that with high probability it is coarse. Color all the simplices with $A'' \lesssim 1$ colors, so that no two simplices that share a vertex are the same color. For a set of $(n+1)$ $m$-simplices of $X'$ with the same color, denoted $\{\Delta_i\}_{i=1}^{n+1}$, let $Bad(\{\Delta_i\})$ to be the probability that there is some unit ball that intersects all $n+1$ of these simplices. Notice that since the perturbation was done at scale $s$, all the $\Delta_j$ have to be inside a ball of radius $\lesssim s$ for $Bad(\{\Delta_i\})$ to be non-zero. The coloring also ensures that the perturbations of vertices in any two simplices of the same color are independent events. Suppose we fix a unit ball inside some ball of radius $s$, which we can denote by, $B^n(1) \subset B^n(s) \subset B^n(R)$. Then the probability that a random $m$-simplex of side-length $s$ in $B^n(s)$ intersects $B^n(1)$ is about $s^{m-n}$. Thus we can estimate,

$$Bad(\{\Delta_i\}) \lesssim s^n (s^{m-n})^{(n+1)} \le s^{-1}$$

\noindent Where the last inequality requires $n>m$. Also notice that each bad event $Bad$ is dependent on $\lesssim log^{n+1}(V)$ other bad events, because of the pre-image condition on $I_2$. Now recall that $s= \delta\, log(V)^{n+1}$. So for small enough $\delta \gtrsim 1$ we have,

$$log^{n+1}(V) Bad(\{\Delta_i\}) < 1/100$$
 
 By the Lovasz-Local lemma \cite{EL}, there is some non-zero chance that none of the bad events $Bad(\{\Delta_i\})$ occur. In other words, there is a non-zero chance that $I_3^{-1}$ of any unit ball intersects at most $(n+1)A''$ simplices. Let $I$ be the map with this property. Now subdivide each $m'$-simplex of $X'$ into roughly $s^{m'}$ simplices so that $I$ maps each new simplex into a ball radius 1. This completes the construction of $I$ and $X'$.
\end{proof}

%%%%%%%%%%
\section*{\centering Appendix 2: Upper bounds on local codes}

In this section we given some topological proofs of upper bounds on the distance and dimension of local codes. The results were previously proved in \cite{B}, \cite{BT}, \cite{BPT}, and \cite{FHK}.

\begin{theorem} \label{lcbound} For $n \ge 1$, if $\mathcal{C}$ is a local code of size $V$ then, 
$$d(\mathcal{C}) \lesssim V^{(n-1)/n}$$
\end{theorem}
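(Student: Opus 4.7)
The plan is to adapt the Bravyi--Poulin--Terhal cleaning argument to axis-aligned slabs in the local embedding. Assume $\dim(\mathcal{C}) \ge 1$, otherwise the statement is vacuous. Fix a minimum-weight representative $\ell$ of a nontrivial logical class; say a logical $X$ operator with $\ell \in \ker(H_1) \setminus \mathrm{Im}(H_2^T)$ and $|\ell| = d(\mathcal{C})$. The embedding $f:\mathcal{B} \to Q^n(V)$ places all qubits in a cube of side $L \lesssim V^{1/n}$.

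\textbf{Step 1 (finding a thin slab).} For each integer $t$ in the range of the first coordinate, let $B_t = \{e \in \mathcal{B} : |f(e)_1 - t| \le A\}$. Each qubit lies in $O(A)$ buffer strips, so $\sum_t |B_t| \lesssim V$, and since $t$ takes $\lesssim L$ values the pigeonhole yields some $t^*$ with $|B_{t^*}| \lesssim V/L \lesssim V^{(n-1)/n}$. Restricting this pigeonhole to $t^*$ in the middle third of the coordinate range also ensures that the two complementary regions $R_L = \{f(e)_1 < t^* - A\}$ and $R_R = \{f(e)_1 > t^* + A\}$ each contain a definite fraction of $\mathcal{B}$. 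By the definition of local, no check touches both $R_L$ and $R_R$.

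\textbf{Step 2 (cleaning).} Decompose $\ell = \ell_L + \ell_B + \ell_R$ along the partition $R_L \sqcup B_{t^*} \sqcup R_R$. Since no check crosses the buffer, both $\ell_L + \ell_B$ and $\ell_R + \ell_B$ lie in $\ker(H_1)$, and at least one of them represents a nontrivial logical class (else $\ell$ would be a stabilizer). The cleaning lemma of Bravyi--Poulin--Terhal then gives two cases: either some nontrivial logical class admits a representative supported entirely in $B_{t^*}$, in which case $d(\mathcal{C}) \le |B_{t^*}| \lesssim V^{(n-1)/n}$ and we are finished; or every logical class can be cleaned off one side, say $R_R$, leaving a nontrivial logical representative supported in $R_L \cup B_{t^*}$.

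\textbf{Step 3 (recursion and main obstacle).} In the second case, rerun the argument on the subcomplex supported in $R_L \cup B_{t^*}$, whose qubit count is at most $V/2 + O(V^{(n-1)/n})$ by the median choice of $t^*$. Iterating this halving yields the geometric bound
$$d(\mathcal{C}) \;\lesssim\; \sum_{k \ge 0} \left(V/2^k\right)^{(n-1)/n} \;\lesssim\; V^{(n-1)/n},$$
which is the claimed estimate. The main obstacle is verifying the recursion hypothesis: one must ensure that the inherited checks on $R_L \cup B_{t^*}$ actually carry a nontrivial logical class of the full code $\mathcal{C}$, rather than merely of a truncated subcode, so that Steps 1--2 may be applied again. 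This is precisely the content of the cleaning lemma, and it is what forces the argument into an iterative rather than one-shot form.
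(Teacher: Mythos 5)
The central gap is in Step 2. It is simply not true that $\ell_L+\ell_B$ and $\ell_R+\ell_B$ lie in $\ker(H_1)$: locality only prevents a check from touching both $R_L$ and $R_R$, but a check $c$ whose support meets both $B_{t^*}$ and $R_R$ satisfies $c\cdot(\ell_L+\ell_B)=c\cdot\ell_B=c\cdot\ell_R$, which need not vanish. Splitting a cycle across a cut yields relative cycles, not cycles, so the claim that ``at least one of them represents a nontrivial logical class'' has no basis. The appeal to the cleaning lemma is also misstated: one may clean a logical off $R_R$ only after knowing $R_R$ is \emph{correctable} (supports no nontrivial logical), and the cleaned representative is only guaranteed to avoid $R_R$ --- nothing confines the stabilizers used in the cleaning to a neighborhood of the cut, so at the next stage the support need not sit inside $R_L\cup B_{t^*}$. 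This is exactly the ``main obstacle'' you flag in Step 3, and it is not ``precisely the content of the cleaning lemma''; as written, the recursion does not close. Two smaller points: choosing $t^*$ in the middle third of the coordinate range does not force each side to contain a constant fraction of the qubits (you would need a median of the qubit distribution, and injectivity of $f$ already gives every slab $\lesssim V^{(n-1)/n}$ qubits without pigeonhole), and the final bound is the size of the single slab present at the terminating step, not a geometric series accumulated over steps.

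The standard repair is the one-shot Bravyi--Terhal argument, which needs no recursion: partition the cube into slabs of thickness comparable to $A$ perpendicular to one axis, so each slab carries $\lesssim V^{(n-1)/n}$ qubits. Either some slab supports a nontrivial logical (done), or every slab is correctable; in the latter case the even-indexed slabs are pairwise decoupled, so their union is correctable (union lemma), and cleaning a nontrivial logical off that union leaves a representative $\ell'$ supported on the odd slabs. Since no check meets two distinct odd slabs, the restriction of $\ell'$ to each odd slab lies in $\ker(H_1)$, and since these restrictions sum to $\ell'\notin \mathrm{Im}(H_2^T)$, some restriction is a nontrivial logical supported in a single slab, giving the bound. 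Note also that the paper proves this statement by an entirely different route: it passes from the local code to a local manifold and runs a Lyusternik--Schnirelmann/cup-product argument with two families of slabs, using Poincar\'e duality, so even a corrected code-level argument along your lines (in the spirit of the cited no-go theorems) would be a genuinely different proof; but as it stands yours has the gaps above.
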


\begin{theorem} \label{pdbound} For $n \ge 2$, if $\mathcal{C}$ is a local code of size $V$ then, 
$$dim(\mathcal{C})d(\mathcal{C})^{2/(n-1)} \lesssim V$$
\end{theorem}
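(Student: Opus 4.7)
The plan is to mirror the cutting approach used for \cref{lcbound} but sharpen it by exploiting the full CSS structure (both the systole and cosystole bounds), in the spirit of the classical Bravyi--Poulin--Terhal proof.

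First I would realize $\mathcal{C}$ geometrically: build a 2-dimensional CW-complex $X$ whose $i$-cells are the distinguished basis of $C_i$ with attaching data prescribed by $H_1$ and $H_2$, and extend the local embedding $f: \mathcal{B} \to Q^n(V)$ to a coarse map $F: X \to B^n(V^{1/n})$ by placing each cell near the spatial location of its corresponding qubit or check. Under this identification $\dim(\mathcal{C}) = \dim H_1(X; \mathbb{F}_2)$, and $d_X(\mathcal{C}), d_Z(\mathcal{C})$ are the minimum weights of a non-trivial 1-cycle and 1-cocycle of $X$ respectively, so $d(\mathcal{C}) = \min(sys_1(X), sys^1(X))$.

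Next I would fix a scale $L$ to be optimized (with $L \approx d^{1/(n-1)}$), cover $B^n(V^{1/n})$ by balls of radius $L$ with bounded overlap, and pull back via $F$ to a cover $\{X_i\}$ of $X$ with $|X_i| \lesssim L^n$ and $|X_i \cap X_j| \lesssim L^{n-1}$, after a slight thickening chosen so that any 1-cycle in $X_i$ that is trivial in $H_1(X)$ is also trivial in $H_1(X_i)$. Applying the Mayer--Vietoris sequence for this cover gives
\[
\dim H_1(X) \;\le\; \sum_i \dim H_1(X_i) + \sum_{i<j} \dim \tilde H_0(X_i \cap X_j).
\]
The first sum is controlled by the systole bound: once $L$ is small, any non-trivial class in $H_1(X_i)$ would, after the thickening, produce a cycle in $X$ of weight below the systole, a contradiction. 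The second sum is where the argument must be sharpened beyond the naive volume estimate $L^{n-1}$: either by an inductive application of \cref{lcbound} to each $(n-1)$-dimensional intersection, or equivalently by reinterpreting excess connected components of $X_i \cap X_j$ as local obstructions that, collected globally, would produce a short 1-cocycle in $X$ violating the cosystole bound $d_Z \ge d$. This yields an improved estimate $\dim \tilde H_0(X_i \cap X_j) \lesssim L^{n-1}/d$ per intersection, and summing over the $\approx V/L^n$ overlaps gives $\dim H_1(X) \lesssim V/L^2$. Optimizing $L \approx d^{1/(n-1)}$ produces the claimed bound $\dim(\mathcal{C})\, d(\mathcal{C})^{2/(n-1)} \lesssim V$.

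The main obstacle will be formalizing the dual cocycle step: making precise how the cosystole bound trades the raw volume estimate $L^{n-1}$ for the improved $L^{n-1}/d$ on each intersection. This is the topological counterpart of the double cleaning in the original BPT argument and requires either a careful inductive invocation of \cref{lcbound} on the lower-dimensional slices (with consistent choices of thickening so that the inductive setup genuinely applies) or an explicit combinatorial pairing between components of $\tilde H_0(X_i \cap X_j)$ and representatives of 1-cocycles of $X$. A secondary technical point is the thickening construction in step two, which must simultaneously control both local $H_1$ and the structure of the intersection $H_0$; once these two ingredients are in hand, the remaining optimization and conclusion are routine.
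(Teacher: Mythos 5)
Your overall frame (realize the code on a complex mapped coarsely into a ball, cover at a scale $L$, bound $\dim H_1$ via Mayer--Vietoris) is in the spirit of the paper's argument, but the two quantitative claims that carry your proof are unjustified, and one of them cannot be true as stated. First, the control of $\sum_i \dim H_1(X_i)$: a class that is nontrivial in $H_1(X_i)$ need not be nontrivial in $H_1(X)$, so there is no contradiction with the systole; and even for classes that do survive in $H_1(X)$, the only a priori weight bound is $|X_i| \lesssim L^n \approx d^{n/(n-1)}$, which exceeds $d$, so the systole gives no contradiction at your chosen scale (and the ``slight thickening'' that would make $H_1(X_i) \to H_1(X)$ injective is not something one can arrange locally). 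Second, the claimed improvement $\dim \tilde H_0(X_i \cap X_j) \lesssim L^{n-1}/d$ is not even consistent with your own conclusion: at $L \approx d^{1/(n-1)}$ it gives $\lesssim 1$ per overlap, and summing over the $\approx V/L^n$ overlaps yields $\dim(\mathcal{C}) \lesssim V d^{-n/(n-1)}$, not $V/L^2$. For $n \ge 3$ that bound is false: the codes of \cref{Cthm} have $d \approx_V V^{(n-1)/n}$, so $V d^{-n/(n-1)} \approx_V 1$, while their dimension is $\approx_V V^{(n-2)/n}$. So the mechanism you gesture at (``excess components give a short cocycle'') cannot yield that per-intersection estimate, and no precise substitute is given.

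What is missing is precisely the cleaning/deformation mechanism that makes the exponent $2/(n-1)$ appear. The paper passes to the associated manifold $M$, so that Poincar\'e duality is available, and proves a deformation lemma: if the inclusion $H_{d-k}(N) \to H_{d-k}(M)$ is zero, then every class in $H_k(M)$ has a representative supported in $M - N$; this is where the second distance (the dual-dimensional systole) enters, playing the role of BPT's cleaning of stabilizer codes. With it, one shows that at scale $s \approx sys^{1/(n-1)}$ all of $H_k(M)$ can be supported on the pre-image of the $(n-1)$-dimensional grid --- cycles in the pre-image of a cube are pushed onto the grid walls, of volume $\approx s^{n-1}$ rather than $s^n$, which is how the scale $d^{1/(n-1)}$ is legitimately reached --- and then a second Mayer--Vietoris step pushes the support onto the $(n-2)$-grid, since each $(n-1)$-wall piece has volume $\approx \delta^{n-1}\, sys$ below the systole; finally $\dim H_k(M) \lesssim vol$ of the $(n-2)$-grid pre-image $\lesssim V\, sys^{-2/(n-1)}$. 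Your proposal explicitly defers this step as ``the main obstacle'' but supplies neither the duality-based deformation lemma nor a combinatorial substitute on the 2-complex, and the bookkeeping you propose around it does not close; as it stands the argument has a genuine gap at its central step.
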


\noindent If a local code of size $V$ satisfies $d(\mathcal{C}) \gtrsim_V V^{(n-1)/n}$, then \cref{pdbound} shows that $dim(\mathcal{C}) \lesssim_V V^{(n-2)/n}$. So, in particular, the code we construct in \cref{Cthm} has the best possible dimension, up to a $polylog(V)$ factor, given its distance.

\par We will prove these theorems by transforming them into statements about manifolds. The next lemma tells us we can convert a local code into a local manifold.

\begin{lemma} If $\mathcal{C}$ is a local code of size $V$, then the associated manifold from \cref{CMlemma} is local.
\end{lemma}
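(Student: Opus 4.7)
The plan is to use the locality map $f: \mathcal{B} \to Q^n(V)$ of the code, where $\mathcal{B}$ is the distinguished basis of $C_1$, to construct a coarse map $F: M \to \mathbb{R}^n$ on the Freedman-Hastings manifold $M$ of \cref{CMlemma}. Recall from the proof of \cref{mfldlemma} that $M$ is assembled from ``pieces'' of types $0$, $1$, and $2$, corresponding respectively to basis vectors of $C_0$, $C_1$, and $C_2$; each piece has volume $\lesssim 1$, and two pieces of adjacent type indices are glued via a standard attaching cylinder exactly when their corresponding basis vectors appear together in a row of $H_1$ or $H_2$.

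First, I would send each piece of type $1$ corresponding to $e \in \mathcal{B}$ into a ball of radius $O(1)$ around $f(e) \in \mathbb{Z}^n \subset \mathbb{R}^n$. Each piece of type $0$ is glued to $\lesssim 1$ pieces of type $1$; I would pick one such neighbor, corresponding to some $e$, and map the type-$0$ piece into a ball of radius $O(1)$ around $f(e)$, and analogously for pieces of type $2$. By the definition of local code, any two basis vectors of $C_1$ attached to a common piece of type $0$ or $2$ share a check, so their $f$-images lie within distance $A$; hence each attaching cylinder can be mapped inside a ball of radius $O(A) = O(1)$. After a finite subdivision of $M$ and a simplicial approximation, the resulting $F$ is simplicial into the standard triangulation of $\mathbb{R}^n$, with image inside $Q^n(V)$, which sits in a ball of volume $\approx V \approx vol(M)$.

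To verify the coarse map conditions: every simplex of $M$ lies in a single piece or attaching cylinder, hence is mapped into a ball of radius $O(1)$, meeting $\lesssim 1$ standard unit simplices of $\mathbb{R}^n$. For the pre-image condition, I would bound how many simplices of $M$ land in a small neighborhood of any fixed lattice point $x$: injectivity of $f$ allows at most one basis vector of $C_1$ with $f(e) = x$; the LDPC condition bounds the number of rows of $H_1$ or $H_2$ containing this basis vector by $\lesssim 1$, so at most $\lesssim 1$ pieces of any type are placed near $x$; since each piece contains $\lesssim 1$ simplices, the bound follows.

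The main obstacle I anticipate is arranging the interpolation across attaching cylinders uniformly enough to control simultaneously both the image diameter of each simplex and the pre-image multiplicity at each lattice point of $\mathbb{Z}^n$. This will likely require fixing a deterministic local ordering for which neighbor each piece of type $0$ or $2$ aligns with, plus a constant-factor subdivision of $M$ to make the final map simplicial; neither changes $vol(M)$, $dim(H_k(M))$, or the systoles of $M$ by more than constants, so the conclusion of the lemma is unaffected.
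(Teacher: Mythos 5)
Your proposal is correct and takes essentially the same route as the paper: the paper packages the same assignment through the nerve map $\rho \circ F_h \colon M \to X$, sending each vertex of the nerve (equivalently, each piece of the Freedman--Hastings decomposition) to $f(b_v)$ for a nearby qubit piece and extending linearly, while you place the pieces and attaching regions in $\mathbb{R}^n$ directly. In both versions the coarseness rests on the same two facts you identify: glued pieces correspond to basis vectors that are part of a common check, so their $f$-images are within distance $A$, and injectivity of $f$ together with the LDPC bound and the bounded volume of each piece controls the pre-image multiplicity.
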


\begin{proof} Let $\mathcal{B}$ be a basis for $C_1$, which is the middle vector space in the chain complex of $\mathcal{C}$. Let $f: \mathcal{B} \to Q^n(V)$ be the map given from the definition of a local code. Let $\tilde F = \rho \circ F_h: M \to X$ be the map from the proof of \cref{Cthm}. For each vertex $v \in X$ pick a piece of $M$ that intersects $(\tilde F)^{-1}(v)$. This piece is distance $\lesssim 1$ from some piece of type 0. This piece of type 0 corresponds to an vector in $\mathcal{B}$ which we'll label $b_v$. Viewing $Q^n(V)$ as a subset of $\mathbb{R}^n$ we can then construct a map $\tilde I: X \to \mathbb{R}^n$ by letteing $\tilde I(v) = f(b_v)$ and extending by linearity. The properties of $f$ then imply that $F = \tilde I \circ \tilde F: M \to \mathbb{R}^n$ is coarse.
\end{proof}

Using this lemma, \cref{lcbound} and \cref{pdbound} now follow from the following theorems.

\begin{theorem} \label{lcboundm} Fix dimensions $n \ge 1$, $d \ge 2$ and $0 < k < d$. If $M$ is a local $d$-dimensional manifold with $vol(M) = V$ then, 
$$\min(sys_k(M), sys_{d-k}(M)) \lesssim V^{(n-1)/n}$$
\end{theorem}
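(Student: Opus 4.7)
The approach is a slicing argument driven by the coarse map. Because $M$ is local in dimension $n$, we have a coarse map $F \colon M \to B^n(R)$ with $R \lesssim V^{1/n}$. For each coordinate direction $i \in \{1, \ldots, n\}$ and each integer $t$, define the unit-width slab preimage
\[
C_{i,t} = F^{-1}\bigl(\{x \in B^n(R) : t \le x_i \le t+1\}\bigr),
\]
a closed subcomplex of $M$. Coarseness implies that every $d$-simplex of $M$ maps into a small ball of $\mathbb{R}^n$, hence meets $O(1)$ slabs $\{C_{i,t}\}_{t \in \mathbb{Z}}$ for each fixed $i$. Therefore $\sum_t \operatorname{vol}(C_{i,t}) \lesssim V$, and pigeonholing over the $\lesssim R$ values of $t$ yields some $C := C_{i,t^\ast}$ with $\operatorname{vol}(C) \lesssim V/R \lesssim V^{(n-1)/n}$.

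Write $M = A \sqcup C \sqcup B$ with $A = F^{-1}(\{x_i < t^\ast\})$ and $B = F^{-1}(\{x_i > t^\ast+1\})$. Assume $H_k(M) \ne 0$ and $H_{d-k}(M) \ne 0$, since otherwise the relevant systole is $0$ by convention. I would then run Mayer--Vietoris on the cover $M = (A \cup C) \cup (B \cup C)$, whose intersection is $C$, to obtain the exact sequence
\[
H_k(A \cup C) \oplus H_k(B \cup C) \longrightarrow H_k(M) \xrightarrow{\ \delta\ } H_{k-1}(C).
\]
Two cases arise. If $\delta \ne 0$, then $C$ carries a $(k-1)$-cycle that is nontrivial in $H_{k-1}(C)$ and has volume $\le \operatorname{vol}(C)$; I would promote it to a genuine $k$- or $(d-k)$-cycle of $M$ of comparable volume by Poincaré--Lefschetz duality on the manifold-with-boundary $C$ (which identifies $H_{k-1}(C) \cong H^{d-k+1}(C,\partial C)$), followed by excision into $H^{d-k+1}(M)$ and Poincaré duality on $M$. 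If $\delta = 0$, then every class of $H_k(M)$ has a representative lying in $A \cup C$ or in $B \cup C$; iterating this bisection, cycling through the $n$ coordinate directions, eventually confines a nontrivial representative inside a sub-region of volume $\lesssim V^{(n-1)/n}$, again yielding $\operatorname{sys}_k(M) \lesssim V^{(n-1)/n}$.

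The main obstacle is the extraction step in the $\delta \ne 0$ case: upgrading a small nontrivial element of $H_{k-1}(C)$ into a genuine $k$- or $(d-k)$-cycle of $M$ whose volume is controlled by $\operatorname{vol}(C)$ and not by the generic filling volume, which can be as large as $V$. The Lefschetz--excision chain above is the cleanest route, but one must verify that the produced class in $H^{d-k+1}(M)$ is nontrivial and that its Poincaré dual admits a representative supported in a bounded neighborhood of $C$. A safer alternative, avoiding the duality step, is to iterate the slicing in all $n$ coordinate directions and track a minimum-volume representative of a fixed nontrivial homology class as it is successively forced into smaller pieces, eventually localizing it onto the preimage of the grid's $(n-1)$-skeleton, which has total volume $\lesssim V^{(n-1)/n}$ by the same averaging used to find $C$.
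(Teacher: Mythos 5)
Your setup (a coarse map $F\colon M\to B^n(R)$ with $R\lesssim V^{1/n}$, and unit-width slab preimages of volume $\lesssim V^{(n-1)/n}$) matches the paper's starting point, but both branches of your dichotomy have genuine gaps. In the $\delta\neq 0$ branch, the connecting homomorphism hands you a $(k-1)$-cycle in $C$, and your chain Lefschetz duality $\to$ excision $\to$ Poincar\'e duality on $M$ lands back in $H_{k-1}(M)$, not in $H_k(M)$ or $H_{d-k}(M)$; worse, $\delta[z]$ is represented by $\partial(z|_{A\cup C})$, which visibly bounds in $M$, so the class you produce is zero. The step you actually need is the intersection-pairing version (the deformation lemma the paper uses for \cref{pdboundm}): if a nontrivial class of $H_k(M)$ cannot be pushed off $C$, then Lefschetz duality \emph{inside} $C$ produces a $(d-k)$-cycle $w\subset C$ with nonzero intersection number against it, hence $[w]\neq 0$ in $H_{d-k}(M)$ and $sys_{d-k}(M)\lesssim vol(C)$. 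Note this unavoidably trades a $k$-class for a $(d-k)$-cycle, which is exactly why the theorem bounds the minimum of the two systoles; an argument that insists on staying in degree $k$ cannot work, since a fixed $k$-class may be forced to cross every slab in a given direction (think of a long circle factor in a torus).

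The $\delta=0$ branch is the more serious gap. Knowing that for each cut position $t$ every class lifts to $H_k(A_t\cup C_t)\oplus H_k(B_t\cup C_t)$ does not let you intersect these conclusions over $t$: the representatives depend on $t$, and pushing a cycle off one slab can push it into the next, so nothing in your iteration forces a single nontrivial class into a single slab. Your fallback target is also miscalibrated: the preimage of the $(n-1)$-skeleton of a unit-scale grid is spread over $\approx R^n$ unit cells and typically has volume $\approx V$, not $\lesssim V^{(n-1)/n}$; only an individual slab preimage enjoys the $R^{n-1}$ bound. The paper supplies the missing global mechanism with a Lyusternik--Schnirelmann/cup-product argument in a single coordinate direction: cover $B$ by two families of width-$2$ slabs $U_1,U_2$; if no nontrivial $k$-cocycle were supported on $\tilde U_1=F_T^{-1}(U_1)$ and no nontrivial $(d-k)$-cocycle on $\tilde U_2$, then $H^{k}(M,\tilde U_1)\cup H^{d-k}(M,\tilde U_2)=0$ would contradict Poincar\'e duality. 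This yields a nontrivial $k$- or $(d-k)$-class supported on one family of slabs; a connected component of a representative lies over a single slab and, after projecting onto the simplices meeting that preimage, has volume $\lesssim V^{(n-1)/n}$ by coarseness. You would need to replace your iteration with this cup-product step (or with the grid-scale maximality argument used for \cref{pdboundm}) for the proof to close.
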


\begin{theorem} \label{pdboundm} Fix dimensions $n \ge 2$, $d \ge 2$ and $0 < k < d$. If $M$ is a local $d$-dimensional manifold with $vol(M) = V$ then, 
$$dim(H_k(M)) \min(sys_k(M), sys_{d-k}(M))^{2/(n-1)} \lesssim V$$
\end{theorem}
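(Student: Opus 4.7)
The plan is to adapt the Bravyi-Poulin-Terhal (BPT) bound to the manifold setting using the coarse map $F: M \to B^n(R)$ (with $R \approx V^{1/n}$) guaranteed by the locality of $M$. Write $s = \min(sys_k(M), sys_{d-k}(M))$; by the preceding \cref{lcboundm}, $s \lesssim V^{(n-1)/n}$, so the slicing scale $L := s^{1/(n-1)}$ satisfies $L \lesssim R$ and there is room to slice.

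First I would overlay $B^n(R)$ with a randomly-shifted axis-aligned grid of $n$-cubes of side $L$, producing a cover $\mathcal{U} = \{U_i = F^{-1}(Q_i)\}$ of $M$. By coarseness each piece has $\mathrm{vol}(U_i) \lesssim L^n$ and each $(d-1)$-dim interface $F^{-1}(Q_i \cap Q_j)$ has volume $\lesssim L^{n-1}$, so the ``wall complex'' $\Gamma = F^{-1}(\text{grid})$ has total $(d-1)$-volume $\lesssim V/L$.

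The heart of the argument is an independence/rank bound for $H_k(M)$. Using a Mayer-Vietoris decomposition for the cover together with a cleaning lemma --- a class in $H_k(M)$ can be represented missing any fixed slab of the grid, obtained by pushing across an interface using the systolic constraint on $M$ and \cref{cosyslemma} to pair with the $(d-k)$-cycle side --- I would show that each independent class in $H_k(M)$ must leave a footprint on $\Gamma$ of $(d-1)$-volume $\gtrsim s^{1/(n-1)}$. Pigeonholing against the total wall volume gives
\[
\dim H_k(M) \cdot s^{1/(n-1)} \;\lesssim\; V/L \;=\; V/s^{1/(n-1)},
\]
which rearranges to $\dim H_k(M) \cdot s^{2/(n-1)} \lesssim V$, as desired. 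Boundary effects near $\partial B^n(R)$ are absorbed by random-shift averaging, in the spirit of the padding step in \cref{enlargelemma}.

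The main obstacle will be the footprint step: since $L^n = s^{n/(n-1)} > s$, a single preimage $U_i$ can in principle support a nontrivial $k$-cycle, so one cannot simply invoke ``a cycle below the systole is a boundary'' at scale $L$. Instead I would follow the original BPT strategy and sweep the cleaning across slabs of width $L$ that span the full $R$ in the remaining $n-1$ directions, using \cref{cosyslemma} and a coarea-style bound on $F$ to make each sweep bite. Getting the exponent $2/(n-1)$ exactly requires balancing the cube-side $L$ against the slab extent through this duality, which is the delicate optimization at the core of the proof.
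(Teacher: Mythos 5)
Your setup --- the coarse map to a ball, the grid at scale $L = s^{1/(n-1)}$, and a BPT-style cleaning lemma to push classes off large cubes --- matches the skeleton of the paper's argument, but the final counting step has a genuine gap. You claim that each independent class in $H_k(M)$ leaves a footprint of $(d-1)$-volume $\gtrsim L$ on the wall complex $\Gamma = F^{-1}(\text{grid})$, and then pigeonhole against $vol(\Gamma) \lesssim V/L$ to conclude $\dim H_k(M) \cdot L \lesssim V/L$. This pigeonhole is not valid: independent homology classes need not have disjoint (or even approximately disjoint) representatives on $\Gamma$, so a per-class volume lower bound cannot simply be divided into the total wall volume. (Think of a high-genus surface: arbitrarily many independent $1$-cycles can pass through the same small region.) Without a disjointness mechanism, the only bound you legitimately get from ``all classes are supported on $\Gamma$'' is $\dim H_k(M) \le \dim Z_k(\Gamma) \lesssim vol(\Gamma) \lesssim V s^{-1/(n-1)}$, which yields the exponent $1/(n-1)$ rather than the required $2/(n-1)$.

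The paper extracts the second factor of $s^{1/(n-1)}$ differently: after showing (via the deformation lemma, which is the Poincar\'e-duality intersection-pairing version of your cleaning step, and which is where $s_M \gtrsim sys^{1/(n-1)}$ is established) that every class in $H_k(M)$ is supported on the preimage of the $(n-1)$-dimensional grid at scale $\delta\, sys^{1/(n-1)}$, it splits that preimage by Mayer--Vietoris into $N_1$, the preimage of the \emph{codimension-two} grid, and $N_2$, the disjoint preimages of the open $(n-1)$-faces. Each face preimage has volume $\lesssim \delta^{n-1} sys$, so for small $\delta$ the systole kills any class supported on a single face --- this is the second, crucial use of the systolic hypothesis. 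Mayer--Vietoris then gives $\dim H_k(M) \le \dim H_k(N_1) + \dim H_{k-1}(N_1) \lesssim vol(N_1) \lesssim V\, sys^{-2/(n-1)}$, since the codimension-two grid preimage occupies a $L^{-2}$ fraction of the volume. To repair your argument you would need to replace the footprint pigeonhole with this (or an equivalent) descent to the codimension-two skeleton; as written, the delicate optimization you defer to ``the core of the proof'' is exactly the step that is missing.
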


For the proof of these theorems we'll use the following notation. For a set $S \subset \mathbb{R}^n$ and a map $F: M \to \mathbb{R}^n$, let $F_T^{-1}(S)$ denote a tiny open neighborhood of the set of simplicies that intersect $F^{-1}(S)$.

\begin{proof}[Proof of \cref{lcboundm}] This proof was communicated by Larry Guth and is a topological version of the argument in \cite{B}. By assumption, we have a coarse map $F: M \to B$, where $B$ is a ball of volume $V$ in $\mathbb{R}^n$. This ball can be covered by two disjoint sets of slabs, $U_1$ and $U_2$, so that most slabs roughly looks like a 2-neighborhood of a $(n-1)$-dimension ball of volume $V^{(n-1)/n}$. More precisely, 
$$U_1 = \{(x_1, \ldots x_n) \in  B:  x_1\textbf{ mod }3 \in [0,2]\}$$
$$U_2 = \{(x_1, \ldots x_n) \in  B:  x_1 \textbf{ mod } 3 \in [1,3]\}$$

\par Let $\tilde U_1 = F_T^{-1}(U_1)$ and $\tilde U_2 = F_T^{-1}(U_2)$ in $M$. Now we apply a Lyusternik-Schnirelmann argument to show that $\tilde U_1$ or $\tilde U_2$ must support a non-trivial $k$ or $d-k$ cocycle. Suppose for contradiction, $H^k(M)$ and $H^{d-k}(M)$ have no classes supported on either $\tilde U_1$ or $\tilde U_2$. Then $H^k(M) = H^k(M, \tilde U_1)$ and $H^{d-k}(M) = H^{d-k}(M, \tilde U_2)$. Since $\tilde U_1 \cup \tilde U_2 = M$, we have a trivial cup product pairing $H^{k}(M, \tilde U_1) \cup H^{n-k}(M, \tilde U_2) = 0$. This is a contradiction, since $H^{k}(M) \cup H^{d-k}(M) \neq 0$ by Poincare duality. We can again apply Poincare duality to see that one of the following two inclusion maps has non-zero image.

$$H_k(\tilde U_1) \to H_k(M)$$
$$H_k(\tilde U_2) \to H_k(M)$$

Let's suppose the first map is non-zero. The case when the second map is non-zero is analogous. Let $z$ be some $k$-cycle which not a boundary and supported on $\tilde U_1$. Then some connected component of $z$, call it $z'$, is also not a boundary. Note that $z'$ must lie in the pre-image of some slab. By using a projection in each simplex, as in the proof of \cref{cosyslemma}, we see that, $z'$ is homologous to a set of $k$-simplices in the pre-image of some slab, denoted by $z''$. Since $F$ is a coarse embedding and each slab has volume roughly $vol(M))^{(n-1)/n}$, we have,
$$sys_k(M) \le vol(z'') \lesssim vol(M))^{(n-1)/n}$$
Switching $k$ and $d-k$ in the argument above, we get the same bound on $sys_{d-k}(M)$. This proves the bound on $\min(sys_k(M), sys_{d-k}(M))$.
\end{proof}

\begin{proof} [Proof of \cref{pdboundm}]  This is a topological version of the argument in \cite{FHK}. The idea of the proof is to show that all the $k$-dimensional homology classes of the manifold are supported on the pre-image of a large $(n-1)$-dimensional lattice. This is done by using Poincare Duality to deform $k$-cycles to a lattice. Then we use the Mayer-Vietoris sequence and simplicial homology to bound the number of such classes that can be supported on this lattice.

\par By assumption there is a coarse map $F: M \to B$ where $B$ is a ball of volume $vol(M)$ in $\mathbb{R}^n$. Define $sys =  \min\{sys_k(M), sys_{d-k}(M)\}$. Let $s_M > 0$ be the largest number so that the following holds. If $Q_{s_M}$ is any $n$-cube in $\mathbb{R}^n$ of sidelength $s_M$, the following inclusion maps are 0,

$$i_k: H_k(F_T^{-1}(Q_{s_M})) \to H_k(M)$$ 
$$i_{n-k}: H_{d-k}(F_T^{-1}(Q_{s_M})) \to H_{d-k}(M)$$

Our goal will be to show that $s_M^{n-1}$ cannot be much smaller than $sys$. The main tool is the following deformation lemma,

\begin{lemma} Let $N$ be an open subset of a $d$-manifold $M$. Suppose for some $0<k<d$, we have that the inclusion map $i_{d-k}: H_{n-k}(N) \to H_{d-k}(M)$ is 0. Then any homology class in $H_k(M)$ has a representative which is supported on $M-N$.
\end{lemma}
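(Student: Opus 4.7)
The plan is to exploit Poincare-Lefschetz duality over $\mathbb{F}_2$ together with the long exact sequence of the pair $(M, M-N)$. Saying that every class in $H_k(M)$ is representable by a cycle supported in $M-N$ is the same, by exactness of
$$H_k(M-N) \to H_k(M) \xrightarrow{j_\ast} H_k(M, M-N),$$
as saying that the inclusion-induced map $j_\ast$ is zero. So the task reduces cleanly to proving $j_\ast = 0$.

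To that end, I would fit $j_\ast$ into a Poincare-Lefschetz duality square. Capping with the $\mathbb{F}_2$ fundamental class of $M$ gives a Poincare duality isomorphism $H^{d-k}(M) \xrightarrow{\cong} H_k(M)$. After thickening $N$ slightly to a regular neighborhood $\bar N$ of the relevant subcomplex so that $\bar N$ is a codimension-zero submanifold with boundary homotopy equivalent to $N$, excision identifies $H_k(M, M-N) \cong H_k(\bar N, \partial \bar N)$, and Lefschetz duality for manifolds with boundary yields $H^{d-k}(N) \cong H^{d-k}(\bar N) \cong H_k(\bar N, \partial \bar N)$. Naturality of cap product then makes the square
$$\begin{array}{ccc} H^{d-k}(M) & \xrightarrow{\;i^\ast\;} & H^{d-k}(N) \\ \downarrow\cong & & \downarrow\cong \\ H_k(M) & \xrightarrow{\;j_\ast\;} & H_k(M, M-N) \end{array}$$
commute, where $i^\ast$ is the restriction induced by the inclusion $i: N \hookrightarrow M$.

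The last step is to dualize the hypothesis. Since we work over the field $\mathbb{F}_2$ and all the homology groups in sight are finite dimensional (both $M$ and the regular neighborhood $\bar N$ are finite simplicial complexes), the universal coefficient theorem identifies $H^{d-k}(X)$ naturally with $\mathrm{Hom}(H_{d-k}(X), \mathbb{F}_2)$, and under this identification $i^\ast$ is exactly the linear transpose of the inclusion $i_\ast : H_{d-k}(N) \to H_{d-k}(M)$. By hypothesis $i_\ast = 0$, so $i^\ast = 0$; by the commutative square $j_\ast = 0$; by the long exact sequence $H_k(M-N) \to H_k(M)$ is surjective, which is the conclusion.

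The main obstacle is bookkeeping around the Lefschetz-duality isomorphism on the right edge of the square, which is cleanest for compact manifolds with boundary rather than for arbitrary open $N$. This is handled by passing to a regular neighborhood of the simplicial thickening $F_T^{-1}(Q_{s_M})$ (the only sort of $N$ the outer argument ever uses), where the homotopy equivalence $N \simeq \bar N$ makes the replacement transparent. Beyond that, the argument is a routine duality chase; no explicit geometric deformation of cycles is needed because the ``deformation'' is packaged into the vanishing of $j_\ast$.
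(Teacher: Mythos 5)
Your argument is correct, and it reaches the paper's conclusion by a recognizably different packaging of the same duality input. The paper works at the level of an explicit cycle $z$: it forms $z\cap N$ as a class in $H_k(N,\partial N)$, and either pushes it into $\partial N$ by hand (giving the representative directly) or, if it is nontrivial, uses the nondegeneracy of the Lefschetz intersection pairing between $H_k(N,\partial N)$ and $H_{d-k}(N)$ to produce $w$ with $z\cdot i(w)\neq 0$, contradicting $i_{d-k}=0$. You instead observe that the conclusion is exactly the vanishing of $j_*\colon H_k(M)\to H_k(M,M-N)$, place $j_*$ in the standard commutative square comparing Poincar\'e duality on $M$ with Lefschetz duality on a regular-neighborhood thickening $\bar N$, and kill it because over $\mathbb{F}_2$ the restriction $i^*$ is the transpose of the inclusion $i_*$, which is $0$ by hypothesis. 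The two proofs rest on the same fact (duality for $N$ plus the hypothesis on $i_*$), but your functorial version buys a few things: no case split, all classes handled at once, and no need to make sense of $z\cap N$ as a relative cycle (a transversality/general-position point the paper leaves implicit, as it does the meaning of $\partial N$ for an open set $N$ -- an issue you flag and resolve via the homotopy equivalence $N\simeq\bar N$ and the identification $H_k(M,M-N)\cong H_k(\bar N,\partial\bar N)$). What the paper's version buys is a concrete geometric deformation, which matches the way the lemma is invoked in the surrounding argument; also note the statement's $H_{n-k}(N)$ is a typo for $H_{d-k}(N)$, which both you and the paper implicitly correct.
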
 

\begin{proof} Let $z$ be a representative of $H_k(M)$, and set $z_N = z \cap N$. If $z_N$ is trivial in $H_k(N, \partial N)$, then it is homologous to a $k$-chain $z_{\partial N}$ in $\partial N$, and so $z$ is homologous to $(z-z_N) + z_{\partial N}$, proving the theorem. So suppose $z_N$ is non-trivial in $H_k(N, \partial N)$. By Poincare Duality there exists some $w \in H_{d-k}(N)$, so that the intersection number of $z_N$ with $w$ is not 0. If $i: N \to M$ is the inclusion map, then the intersection number of $z$ with $i(w)$ would also not be 0. This implies that $i(w)$ is a non-trivial representative of a class in the image of $i_{d-k}$, which contradicts the assumption in the lemma.\end{proof}

Denote by $G^k_s$, the $k$-dimensional grid on $\mathbb{R}^n$ at scale $s>0$. That is the set of all points $(x_1, x_2 \ldots x_n) \in \mathbb{R}^n$ so that at least $(n-k)$ of the $x_i$ are integer multiples of $s$. Let $A' > 0$ by any positive constant and $Q_{A's_M}$ any cube of sidelength $A's_M$. Given these choices we consider the following set, 

$$N = F_T^{-1}(Q_{A's_M}) - \overline{F_T^{-1}(G^{n-1}_{s_M})}$$

\noindent Each connected component of $M-N$ lies in the $F^{-1}_T$ pre-image of a cube of side-length $s_M$ so the inclusion maps, $i_k$ and $i_{n-k}$ are 0 for this $N$. If $z$ was some $k$-cycle or $(d-k)$-cycle supported in $F_T^{-1}(Q_{A's_M})$, then by the deformation lemma it would be supported in $\partial N$. But by the coarseness of the map we have, 
$$vol(\partial N) \lesssim (A')^n s_M^{n-1}$$
So if $sys \lesssim (A')^n s_M^{n-1}$ then $z$ would be homologically trivial. Since the choice for $z$ and $Q_{A's_M}$ was arbitrary, the inclusion maps would be 0 for any cube of side-length $A's_M$. This would contradict the maximality of $s_M$, and so we see that $s_M \gtrsim sys^{1/(n-1)}$.
 \par
Using the definition of $s_M$ and the deformation lemma again, we see that all the $k$-dimensional homology classes can be supported on $F_T^{-1}(G^{n-1}_{\delta sys^{1/(n-1)}})$ for some small $\delta \gtrsim 1$. In other words, the following inclusion map is surjective,

$$j_k: H_k(F_T^{-1}(G^{n-1}_{\delta sys^{1/(n-1)}})) \to H_k(M)$$

\noindent Now consider the following two sets that cover $F_T^{-1}(G^{n-1}_{\delta sys^{1/(n-1)}})$,

$$N_1 = F_T^{-1}(G^{n-2}_{\delta sys^{1/(n-1)}})$$
$$N_2 = F_T^{-1}(G^{n-1}_{\delta sys^{1/(n-1)}}) - F_T^{-1}(G^{n-2}_{\delta sys^{1/(n-1)}})$$

\noindent $N_2$ is roughly the pre-image of several disconnected $(n-1)$-dimensional cubes, each of volume $\lesssim \delta^{n-1} sys$. If $\delta$ is small enough we see that no nontrivial $k$-dimensional homology of $M$ can be supported on $N_2$. 
\noindent From the Mayer-Vietoris sequence of $N_1$ and $N_2$ we see that,

$$dim(H_k(M)) = dim(im(j_k)) \le dim(H_{k}(N_1)) + dim(H_{k-1}(N_1))$$

\noindent Observe that since we are using simplicial homology, if a simplicial complex $X$ has bounded geometry, then have $dim(H_*(X)) \lesssim vol(X)$. Putting this together with the previous estimate we get,
$$dim(H_k(M)) \le dim(H_{k}(N_1)) + dim(H_{k-1}(N_1)) \lesssim vol(N_1) \lesssim vol(M) sys^{-2/(n-1)}$$
\end{proof}

\section*{\centering Declarations}

\noindent \textbf{Data availability.} Data sharing is not applicable to this article as no datasets were generated or analyzed during
the current study. 
\\ \\
\noindent \textbf{Conflict of interest.} The author declares no conflicts of interest.

%%%

\end{document}